\relax
%File: formatting-instructions-latex-2021.tex
%release 2021.1
\documentclass[letterpaper]{article} % DO NOT CHANGE THIS
\usepackage{aaai21}  % DO NOT CHANGE THIS
\usepackage{times}  % DO NOT CHANGE THIS
\usepackage{helvet} % DO NOT CHANGE THIS
\usepackage{courier}  % DO NOT CHANGE THIS
\usepackage[hyphens]{url}  % DO NOT CHANGE THIS
\usepackage{graphicx} % DO NOT CHANGE THIS
\urlstyle{rm} % DO NOT CHANGE THIS
  % DO NOT CHANGE THIS
\usepackage{natbib}  % DO NOT CHANGE THIS AND DO NOT ADD ANY OPTIONS TO IT
\usepackage{caption} % DO NOT CHANGE THIS AND DO NOT ADD ANY OPTIONS TO IT
\frenchspacing  % DO NOT CHANGE THIS
\setlength{\pdfpagewidth}{8.5in}  % DO NOT CHANGE THIS
\setlength{\pdfpageheight}{11in}  % DO NOT CHANGE THIS
%\nocopyright
%PDF Info Is REQUIRED.
% For /Author, add all authors within the parentheses, separated by commas. No accents or commands.
% For /Title, add Title in Mixed Case. No accents or commands. Retain the parentheses.
\pdfinfo{
/Title (Trembling-Hand Perfection and Correlation in Sequential Games)
/Author (Anonymous)
/TemplateVersion (2021.1)
} %Leave this
\usepackage{amsmath}
\usepackage{amsthm}
\usepackage{booktabs}
\usepackage{algorithm}
\usepackage{algorithmic}
\usepackage{amssymb}
\usepackage{mathtools}
\usepackage{tikz}
\usetikzlibrary{calc}
\usepackage{thm-restate}
\usepackage[switch]{lineno}
%\makeatletter
%    \AtBeginDocument{%
%      \@ifpackageloaded{amsmath}{%
%        \newcommand*\patchAmsMathEnvironmentForLineno[1]{%
%          \expandafter\let\csname old#1\expandafter\endcsname\csname #1\endcsname
%          \expandafter\let\csname oldend#1\expandafter\endcsname\csname end#1\endcsname
%          \renewenvironment{#1}%
%                           {\linenomath\csname old#1\endcsname}%
%                           {\csname oldend#1\endcsname\endlinenomath}%
%        }%
%        \newcommand*\patchBothAmsMathEnvironmentsForLineno[1]{%
%          \patchAmsMathEnvironmentForLineno{#1}%
%          \patchAmsMathEnvironmentForLineno{#1*}%
%        }%
%        \patchBothAmsMathEnvironmentsForLineno{equation}%
%        \patchBothAmsMathEnvironmentsForLineno{align}%
%        \patchBothAmsMathEnvironmentsForLineno{flalign}%
%        \patchBothAmsMathEnvironmentsForLineno{alignat}%
%        \patchBothAmsMathEnvironmentsForLineno{gather}%
%        \patchBothAmsMathEnvironmentsForLineno{multline}%
%      }{}
%    }
%\makeatother

% MY DEFINITIONS

\newtheorem{lemma}{Lemma}

\newtheorem{definition}{Definition}
\newcommand{\bigtimes}{\text{\LARGE $\times$}}
\newcommand{\NPHard}{$\mathsf{NP}$-hard}

\newcommand{\I}{\mathcal{I}}
\newcommand{\C}{\mathcal{C}}
\DeclareMathOperator*{\argmax}{argmax}

\DeclareMathAlphabet{\mymathbb}{U}{BOONDOX-ds}{m}{n}

\setcounter{secnumdepth}{0} %May be changed to 1 or 2 if section numbers are desired.

% The file aaai21.sty is the style file for AAAI Press
% proceedings, working notes, and technical reports.
%

% Title

% Your title must be in mixed case, not sentence case.
% That means all verbs (including short verbs like be, is, using,and go),
% nouns, adverbs, adjectives should be capitalized, including both words in hyphenated terms, while
% articles, conjunctions, and prepositions are lower case unless they
% directly follow a colon or long dash

\title{Trembling-Hand Perfection and Correlation in Sequential Games}
\author{
    Alberto Marchesi, Nicola Gatti \\
}
\affiliations{
	Politecnico di Milano, Piazza Leonardo da Vinci 32, I-20133, Milan, Italy \\
	\texttt{\{name.surname\}@polimi.it}
}

\begin{document}

% \linenumbers

\maketitle

\begin{abstract}
	We initiate the study of trembling-hand perfection in sequential (\emph{i.e.}, extensive-form) games with correlation.
	We introduce the \emph{extensive-form perfect correlated equilibrium} (EFPCE) as a refinement of the classical \emph{extensive-form correlated equilibrium} (EFCE) that amends its weaknesses off the equilibrium path.
	This is achieved by accounting for the possibility that players may make mistakes while following recommendations independently at each information set of the game.
	After providing an axiomatic definition of EFPCE, we show that one always exists since any perfect (Nash) equilibrium constitutes an EFPCE, and that it is a refinement of EFCE, as any EFPCE is also an EFCE.
	Then, we prove that, surprisingly, computing \emph{an} EFPCE is \emph{not} harder than finding an EFCE, since the problem can be solved in polynomial time for general $n$-player extensive-form games (also with chance).
	This is achieved by formulating the problem as that of finding a limit solution (as $\epsilon \rightarrow 0$) to a suitably defined \emph{trembling} LP parametrized by $\epsilon$, featuring exponentially many variables and polynomially many constraints.
	To this end, we show how a recently developed polynomial-time algorithm for trembling LPs can be adapted to deal with problems having an exponential number of variables.
	This calls for the solution of a sequence of (non-trembling) LPs with exponentially many variables and polynomially many constraints, which is possible in polynomial time by applying an ellipsoid against hope approach.
\end{abstract}

\section{Introduction}\label{sec:intro}

\emph{Nash equilibrium} (NE)~\citep{nash1951non} computation in $2$-player zero-sum games has been the flagship challenge in artificial intelligence for several years (see, \emph{e.g.}, landmark results in poker~\citep{brown2018superhuman,brown2019superhuman}).
Recently, increasing attention has been devoted to multi-player games, where equilibria based on \emph{correlation} are now mainstream.
Correlation in games is customarily modeled through a trusted external mediator that privately recommends actions to the players.
The mediator acts as a \emph{correlation device} that draws action recommendations according to a publicly known distribution.
%, privately communicating an action to each player.
%
The seminal notion of \emph{correlated equilibrium} (CE) introduced by~\citet{aumann1974subjectivity} requires that no player has an incentive to deviate from a recommendation.
This is encoded by NE conditions applied to an \emph{extended} game where the correlation device plays first by randomly selecting a profile of actions according to the public distribution; then, the original game is played with each player being informed only of the action selected for her.
CEs are computationally appealing since they can be implemented in a \emph{decentralized} way by letting players play independently according to no-regret procedures~\citep{hart2000simple}.
Computing CEs in sequential (\emph{i.e.}, extensive-form) games with imperfect information has received considerable attention in the last years~\citep{celli2019learning,DBLP:conf/aaai/FarinaBS20}.
In this context, various CE definitions are possible, depending on the ways recommendations are revealed to the players.
The one that has emerged as the most suitable for sequential games is the \emph{extensive-form correlated equilibrium} (EFCE) of~\citet{von2008extensive}.
The key feature of EFCE is that recommendations are revealed to the players only when they reach a decision point where the action is to be played, and, if one player defects from a recommendation, then she stops receiving them in the future.
\citet{von2008extensive} show that EFCEs can be characterized by a polynomially-sized \emph{linear program} (LP) in two-player games without chance.
%, using a particular compact representation for correlated distributions.
%
In the same restricted setting, \citet{DBLP:conf/nips/FarinaLFS19a} show how to find an EFCE by solving a bilinear saddle-point problem, which can be exploited to derive an efficient no-regret algorithm~\citep{DBLP:conf/nips/FarinaLFS19}.
%
% Recently, the same result has been derived for two-player games with public chance~\citep{farina2020polynomial}.
%
In general $n$-player games, \citet{huang2008computing} prove that \emph{an} EFCE can be computed in polynomial time by means of an \emph{ellipsoid against hope} (EAH) algorithm similar to that introduced by~\citet{papadimitriou2008computing} for CEs in compactly represented games (see also~\citep{DBLP:conf/icml/GordonGM08} for another algorithm).
Instead, finding a payoff-maximizing EFCE is \NPHard~\citep{von2008extensive}.
Very recently,~\citet{celli2020no} provide an efficient no-regret procedure for EFCE in $n$-player games.
One of the crucial weaknesses of standard equilibrium notions, such as NE, in sequential games is that they may prescribe to play sub-optimally off the equilibrium path, \emph{i.e.}, at those information sets never reached when playing equilibrium strategies.
One way to amend this issue is \emph{trembling-hand perfection}~\citep{selten1975reexamination}, whose rationale is to let players reasoning about the possibility that they may make mistakes in the future, playing sub-optimal actions with small, vanishing probabilities (a.k.a.~trembles).
%, called \emph{trembles}.
%
This idea leads to the NE refinement known as \emph{perfect equilibrium} (PE)~\citep{selten1975reexamination}.
Other refinements have been introduced in the literature; \emph{e.g.}, in the \emph{quasi-perfect equilibrium} of~\citet{van1984relation} players only account for opponents' future trembles (see~\citep{van1991stability} for other examples).
Trembles can also be introduced in normal-form games, leading to robust equilibria that rule out weakly dominated strategies~\citep{hillas2002foundations}.
Recently, equilibrium refinement has been addressed beyond the NE case, such as, \emph{e.g.}, in Stackelberg settings~\citep{farina2018trembling,marchesi2019quasi}.

Trembling-hand perfection for CEs has only been studied from a theoretical viewpoint in normal-form games, by~\citet{dhillon1996perfect}.
The authors introduce the concept of \emph{perfect} CE by enforcing PE conditions in the extended game, rather than NE ones.
Despite equilibrium refinements in sequential games are ubiquitous, no work addressed perfection and correlation together in such setting.~\footnote{Applying the perfect CE by~\citet{dhillon1996perfect} to the normal-form representation of a sequential game does \emph{not} generally solve equilibrium weaknesses. This would lead to a correlated version of the \emph{normal-form} PE, which is known not to guard against sub-optimality off the equilibrium path~\citep{van1991stability}.} 

\paragraph{Original Contributions}
We give an axiomatic definition of \emph{extensive-form perfect correlated equilibrium} (EFPCE), enforcing PE conditions, rather than NE ones, in the extended game introduced by~\citet{von2008extensive} for their original definition of EFCE.
Intuitively, this accounts for the possibility that players may make mistakes while following recommendations independently at each information set of the game.
Trembles are introduced on players' strategies, while the correlation device is defined as in classical CE notions.
First, we show that an EFPCE always exists, since any PE constitutes an EFPCE, and that EFPCE is a refinement of EFCE, as any EFPCE is also an EFCE.
Then, we show how \emph{an} EFPCE can be computed in polynomial time in any $n$-player extensive-form game (also with chance).
At first, we introduce a characterization of the equilibria of perturbed extended games (\emph{i.e.}, extended games with trembles) inspired by the definition of EFCE based on \emph{trigger agents}, introduced by~\citet{DBLP:conf/icml/GordonGM08} and~\citet{DBLP:conf/nips/FarinaLFS19a}.
This result allows us to formulate the EFPCE problem as that of finding a limit solution (as $\epsilon \rightarrow 0$) to a suitably defined \emph{trembling} LP parametrized by $\epsilon$, featuring exponentially many variables and polynomially many constraints.
To this end, we show how the polynomial-time algorithm for trembling LPs developed by~\citet{farina2018practical} can be adapted to deal with problems having an exponential number of variables.
This calls for the solution of a sequence of (non-trembling) LPs with exponentially many variables and polynomially many constraints, which is possible in polynomial time by applying an EAH approach.
The latter is inspired by the analogous algorithm of~\citet{huang2008computing} for EFCEs, which is adapted to deal with a different set of dual constraints, requiring a modification of the polynomial-time separation oracle of~\citet{huang2008computing}.~\footnote{All the omitted proofs are in Appendix~\ref{app:proofs}.}

\section{Preliminaries}\label{sec:prelim}

% Next, we introduce all the concepts needed in this work (see~\citep{shoham2008multiagent} for more details).

\begin{figure*}[!htp]\centering
	{
		\begin{minipage}[b]{4.7cm}\centering%
			\def\done{1.07*1.2}
			\def\dtwo{.53*1.2}
			\def\dleaf{.27*1.2}
			\def\dvert{.8*1.2}
			\begin{tikzpicture}[baseline=-1.7cm,scale=.95]
			
			% nodes
			
			\node[fill=black,draw=black,circle,inner sep=.5mm] (A) at (0, 0) {};
			\node[fill=white,draw=black,circle,inner sep=.5mm] (X) at ($(-\done,-\dvert)$) {};
			\node[fill=white,draw=black,circle,inner sep=.5mm] (Y) at ($(\done,-\dvert)$) {};
			\node[fill=black,draw=black,circle,inner sep=.5mm] (B1) at ($(X) + (-\dtwo, -\dvert)$) {};
			\node[fill=black,draw=black,circle,inner sep=.5mm] (B2) at ($(X) + (\dtwo, -\dvert)$) {};
			\node[fill=white,draw=black,inner sep=.6mm] (l1) at ($(B1) + (-\dleaf, -\dvert)$) {};
			\node[fill=white,draw=black,inner sep=.6mm] (l2) at ($(B1) + (\dleaf, -\dvert)$) {};
			\node[fill=white,draw=black,inner sep=.6mm] (l3) at ($(B2) + (-\dleaf, -\dvert)$) {};
			\node[fill=white,draw=black,inner sep=.6mm] (l4) at ($(B2) + (\dleaf, -\dvert)$) {};
			
			\node[inner sep=0] at ($(l1) + (.17,0)$) {\scriptsize$z$};
			
			\node[fill=black,draw=black,circle,inner sep=.5mm] (C) at ($(Y) + (-\dtwo, -\dvert)$) {};
			\node[fill=black,draw=black,circle,inner sep=.5mm] (D) at ($(Y) + (\dtwo, -\dvert)$) {};
			\node[fill=white,draw=black,inner sep=.6mm] (l5) at ($(C) + (-\dleaf, -\dvert)$) {};
			\node[fill=white,draw=black,inner sep=.6mm] (l7) at ($(C) + (\dleaf, -\dvert)$) {};
			\node[fill=white,draw=black,inner sep=.6mm] (l8) at ($(D) + (-\dleaf, -\dvert)$) {};
			\node[fill=white,draw=black,inner sep=.6mm] (l10) at ($(D) + (\dleaf, -\dvert)$) {};
			
			% actions
			
			\draw[semithick] (A) --node[fill=white,inner sep=.9] {\scriptsize$a$} (X);
			\draw[semithick] (A) --node[fill=white,inner sep=.9] {\scriptsize$b$} (Y);
			\draw[semithick] (B1) --node[fill=white,inner sep=.9] {\scriptsize$c$} (l1);
			\draw[semithick] (B1) --node[fill=white,inner sep=.9] {\scriptsize$d$} (l2);
			\draw[semithick] (B2) --node[fill=white,inner sep=.9] {\scriptsize$c$} (l3);
			\draw[semithick] (B2) --node[fill=white,inner sep=.9] {\scriptsize$d$} (l4);
			\draw[semithick] (C) --node[fill=white,inner xsep=0,inner ysep=.9,xshift=-.4] {\scriptsize$e$} (l5);
			\draw[semithick] (C) --node[fill=white,inner xsep=.2mm,inner ysep=.9,xshift=.6] {\scriptsize$f$} (l7);
			\draw[semithick] (D) --node[fill=white,inner xsep=0,inner ysep=.9,xshift=-.4] {\scriptsize$g$} (l8);
			\draw[semithick] (D) --node[fill=white,inner xsep=.2mm,inner ysep=.9,xshift=.6] {\scriptsize$h$} (l10);
			\draw[semithick] (X) --node[fill=white,inner sep=.9] {\scriptsize$m$} (B1);
			\draw[semithick] (X) --node[fill=white,inner sep=.9] {\scriptsize$n$} (B2);
			\draw[semithick] (Y) --node[fill=white,inner sep=.9] {\scriptsize$o$} (C);
			\draw[semithick] (Y) --node[fill=white,inner sep=.9] {\scriptsize$p$} (D);
			
			% infosets
			
			\draw[black!60!white] (X) circle (.2);
			\node[black!60!white]  at ($(X) + (0,.38)$) {\textsc{x}};
			
			\draw[black!60!white] (Y) circle (.2);
			\node[black!60!white]  at ($(Y) + (0,.38)$) {\textsc{y}};
			
			\draw[black!60!white] (A) circle (.2);
			\node[black!60!white]  at ($(A) + (-.4, 0)$) {\textsc{i}};
			
			\draw[black!60!white] (C) circle (.2);
			\node[black!60!white]  at ($(C) + (-.2,.38)$) {\textsc{k}};
			
			\draw[black!60!white] (D) circle (.2);
			\node[black!60!white]  at ($(D) + (.2,.38)$) {\textsc{l}};
			
			\draw[black!60!white] ($(B1) + (0, .2)$) arc (90:270:.2);
			\draw[black!60!white] ($(B1) + (0, .2)$) -- ($(B2) + (0, .2)$);
			\draw[black!60!white] ($(B1) + (0, -.2)$) -- ($(B2) + (0, -.2)$);
			\draw[black!60!white] ($(B2) + (0, -.2)$) arc (-90:90:.2);
			\node[black!60!white]  at ($(B1) + (-.2, .4)$) {\textsc{j}};
			
			% payoffs
			
			\node  at ($(l1) + (0,.-.3)$) {\scriptsize$2,2$};
			\node  at ($(l2) + (0,.-.3)$) {\scriptsize$3,1$};
			\node  at ($(l3) + (0,.-.3)$) {\scriptsize$1,3$};
			\node  at ($(l4) + (0,.-.3)$) {\scriptsize$0,0$};
			\node  at ($(l5) + (0,.-.3)$) {\scriptsize$\frac{1}{2},0$};
			\node  at ($(l7) + (0,.-.3)$) {\scriptsize$0,0$};
			\node  at ($(l8) + (0,.-.3)$) {\scriptsize$0,1$};
			\node  at ($(l10) + (0,.-.3)$) {\scriptsize$\frac{1}{2},1$};
			
			\end{tikzpicture}
	\end{minipage}}
	\hspace{.5cm}
	{\begin{minipage}[b]{4.2cm}\centering
			\setlength{\tabcolsep}{1pt}
			\begin{tabular}{c@{\hskip 6pt}cccc}
				& \textsc{i} & \textsc{j} & \textsc{k} & \textsc{l} \\[-1mm]
				\midrule\\[-5mm]
				$\pi_1$ & $a$ & $c$ & $e$ & $g$  \\[-.3mm]
				$\pi_2$ & $a$ & $c$ & $e$ & $h$  \\[-.3mm]
				$\pi_3$ & $a$ & $c$ & $f$ & $g$  \\[-.3mm]
				$\pi_4$ & $a$ & $c$ & $f$ & $h$  \\[-.3mm]
				$\pi_5$ & $a$ & $d$ & $e$ & $g$  \\[-.3mm]
				$\pi_6$ & $a$ & $d$ & $e$ & $h$  \\[-.3mm]
				$\pi_7$ & $a$ & $d$ & $f$ & $g$  \\[-.3mm]
				$\pi_8$ & $a$ & $d$ & $f$ & $h$  \\[-.3mm]
				\bottomrule
			\end{tabular}\quad
			\begin{tabular}{c@{\hskip 6pt}cccc}
				& \textsc{i} & \textsc{j} & \textsc{k} & \textsc{l} \\[-1mm]
				\midrule\\[-5mm]
				$\pi_9$    & $b$ & $c$ & $e$ & $g$  \\[-.3mm]
				$\pi_{10}$ & $b$ & $c$ & $e$ & $h$  \\[-.3mm]
				$\pi_{11}$ & $b$ & $c$ & $f$ & $g$  \\[-.3mm]
				$\pi_{12}$ & $b$ & $c$ & $f$ & $h$  \\[-.3mm]
				$\pi_{13}$ & $b$ & $d$ & $e$ & $g$  \\[-.3mm]
				$\pi_{14}$ & $b$ & $d$ & $e$ & $h$  \\[-.3mm]
				$\pi_{15}$ & $b$ & $d$ & $f$ & $g$  \\[-.3mm]
				$\pi_{16}$ & $b$ & $d$ & $f$ & $h$  \\[-.3mm]
				\bottomrule
			\end{tabular}
	\end{minipage}}
	\hspace{.5cm}
	{\begin{minipage}[b]{6.3cm}
			\setlength{\tabcolsep}{1pt}
			\begin{tabular}{rl}
				$\Pi_1(a)$ &= $ \{\pi_1, \dots, \pi_{8}\}$\\
				$\Pi_1(f)$ &= $ \{\pi_3, \pi_4, \pi_7, \pi_8, \pi_{11}, \pi_{12}, \pi_{15}, \pi_{16} \}$\\
				$\Pi_1(\textsc{i})$ &= $\{\pi_1,\dots,\pi_{16}\}$\\
				$\Pi_1(\textsc{j})$ &= $\{\pi_1, \dots, \pi_8\}$\\
				$\Pi_1(\textsc{k})$ &= $\{\pi_9, \dots, \pi_{16}\}$\\
				$\Pi_1(\textsc{l})$ &= $ \{\pi_9, \dots, \pi_{16}\}$\\
				$\Pi_1(\textsc{k}, f)$ &= $ \{\pi_3, \pi_4, \pi_7, \pi_8\}$\\
				$\Pi_1(\textsc{l}, g)$ &= $ \{\pi_9, \pi_{11}, \pi_{13}, \pi_{15}\}$\\
				$\Pi_1(z)$ &= $\{\pi_1, \pi_2, \pi_3, \pi_4\}$
			\end{tabular}%
	\end{minipage}}
	\caption{(\emph{Left}) Sample EFG. Black round nodes belong to player $1$, white round nodes belong to player $2$, and white square nodes are leaves (with players' payoffs specified under them). Rounded gray lines denote infosets. (\emph{Center}) Set $\Pi_1$ of pure strategies for player $1$. (\emph{Right}) Examples of certain subsets of $\Pi_1$ used in this work.}\label{fig:example_game}
\end{figure*}

\subsection{Extensive-Form Games}

We focus on $n$-player \emph{extensive-form games} (EFGs) with imperfect information.
We let $N \coloneqq \{ 1, \ldots, n \}$ be the set of players, and, additionally, we let $c$ be the \emph{chance} player representing exogenous stochasticity.
The sequential structure is encoded by a game tree with node set $H$.
Each node $h \in H$ is identified by the ordered sequence $\sigma(h)$ of actions encountered on the path from the root to $h$.
We let $Z \subseteq H$ be the subset of terminal nodes, which are the leaves of the game tree.
For every non-terminal node $h \in H \setminus Z$, we let $P(h) \in N \cup \{ c \}$ be the player who acts at $h$, while $A(h)$ is the set of actions available.
The function $p_c: Z \to (0,1]$ defines the probability of reaching each terminal node given the chance moves on the path from the root to that node.
For every player $i \in N$, the function $u_i : Z \to \mathbb{R}$ encodes player $i$'s utilities over terminal nodes.
Imperfect information is modeled through \emph{information sets} (infosets).
An infoset $I \subseteq H \setminus Z$ of player $i \in N$ is a group of player $i$'s nodes indistinguishable for her, \emph{i.e.}, for every $h \in I$, it must be the case that $P(h) = i$ and $A(h) = A(I)$, where $A(I)$ is the set of actions available at the infoset.
W.l.o.g., we assume that the sets $A(I)$ are disjoint.
We denote with $\I_i$ the collection of infosets of player $i \in N$.
%, which form a partition of player $i$'s nodes.
%
% For the ease of presentation, 
For every $i \in N$, we let $A_i \coloneqq \bigcup_{I \in \I_i} A(I)$ be the set of all player $i$'s actions.
Moreover, we let $A \coloneqq \bigcup_{i \in N} A_i$.
We focus on EFGs with \emph{perfect recall} in which no player forgets what she did or knew in the past.
Formally, for every player $i \in N$ and infoset $I \in \I_i$, it must be that every node $h \in I$ is identified by the same ordered sequence $\sigma_i(I)$ of player $i$'s actions from the root to that node.
Given two infosets $I, J \in  \I_i$ of player $i \in N$, we say that $J$ \emph{follows} $I$, written $I \prec J$, if there exist two nodes $h \in I$ and $k \in J$ such that $h$ is on the path from the root to $k$.
By perfect recall, $\prec$ is a partial order on $\I_i$.
We also write $I \preceq J$ whenever either $I = J$ or $I \prec J$.
For every infoset $I \in \I_i$, we let
%, we let $\C^\star(I) \coloneqq \{ J \in \I_i \mid I \preceq J \}$ be the set of player $i$'s infosets that follow $I$ according to $\preceq$ (this included).
%
%Moreover, 
$\C(I,a) \subseteq \I_i$ be the set of all infosets that immediately follow $I$ by playing action $a \in A(I)$.
%, \emph{i.e.}, formally, $\C(I,a) \coloneqq \{ J \in \I_i \mid I \prec J \wedge \nexists K \in \I_i : I \prec  K \prec J \}$.

\paragraph{Strategies}
A player's \emph{pure strategy} specifies an action at every infoset of her.
For every $i \in N$, the set of player $i$'s pure strategies $\pi_i$ is $\Pi_i \coloneqq \bigtimes_{I \in \I_i} A(I)$, with $\pi_i(I) \in A(I)$ being the action at infoset $I \in \I_i$.
Moreover, $\Pi \coloneqq \bigtimes_{i \in N} \Pi_i$ denotes the set of \emph{strategy profiles} specifying a strategy for each player, while, for $i \in N$, we let $ \Pi_{-i} \coloneqq \bigtimes_{j \neq i \in N} \Pi_j$ be the (partial) strategy profiles defining a strategy for each player other than $i$.
Given $\pi_i \in \Pi_i$ and $a \in A_i$, we write $a \in \pi_i$ whenever $\pi_i$ prescribes to play $a$.
Analogously, for $\pi \in \Pi$ and $a \in A$, we write $a \in \pi$. 
%
% A \emph{reduced pure strategy} is obtained by collapsing strategies of a player that differ in actions at infosets which are unreachable due to an earlier move of that player.
%
% We let $\Pi_p^\star$ be the set of player $p$'s reduced pure strategies, while $\Pi^\star \coloneqq \bigtimes_{p \in N} \Pi_p^\star$ is the set of \emph{reduced strategy profiles}.
%
% A reduced strategy $\pi_p \in \Pi_p^\star$ does \emph{not} specify actions at unreachable infosets $I \in \I_p$, where we let $\pi_p(I) = \star$.
%
Players are allowed to randomize over pure strategies by playing \emph{mixed strategies}.
For $i \in N$, we let $\mu_i: \Pi_i \to [0,1]$ be a player $i$'s mixed strategy, where $\sum_{\pi_i \in \Pi_i} \mu_i(\pi_i) = 1$.
The perfect recall assumption allows to work with \emph{behavior strategies}, which define probability distributions locally at each infoset.
For $i \in N$, we let $\beta_i : A_i \to [0,1]$ be a player $i$'s behavior strategy, which is such that $\sum_{a \in A(I)} \beta_i(a) = 1$ for all $I \in \I_i$.~\footnote{EFGs with perfect recall admit a compact strategy representation called \emph{sequence form}~\citep{von1996efficient}. See Appendix~\ref{app:sequence}.}

\paragraph{Additional Notation}
We introduce some subsets of $\Pi_i$ (see Figure~\ref{fig:example_game} for some examples).
For every action $a \in A_i$ of player $i \in N$, we define $\Pi_i(a) \coloneqq \{ \pi_i \in\Pi_i \mid a\in \pi_i  \}$ as the set of player $i$'s pure strategies specifying $a$.
For every infoset $I \in \I_i$, we let $\Pi_{i}(I) \subseteq \Pi_i$ be the set of strategies that prescribe to play so as to reach $I$ whenever possible (depending on players' moves up to that point) and \emph{any} action whenever reaching $I$ is \emph{not} possible anymore.
Additionally, for every action $a \in A(I)$, we let $\Pi_{i}(I,a) \subseteq \Pi_{i}(I) \subseteq \Pi_i$ be the set of player $i$'s strategies that reach $I$ and play $a$.
Given a terminal node $z \in Z$, we denote with $\Pi_i(z) \subseteq \Pi_i$ the set of strategies by which player $i$ plays so as to reach $z$, while $\Pi(z) \coloneqq \bigtimes_{i \in N} \Pi_i(z)$ and $\Pi_{-i}(z) \coloneqq \bigtimes_{j \neq i \in N} \Pi_j(z)$.
We also introduce the following subsets of $Z$.
For every $i \in N$ and $I \in \I_i$, we let $Z(I) \subseteq Z$ be the set of terminal nodes reachable from infoset $I$ of player $i$.
Moreover, $Z(I, a) \subseteq Z(I) \subseteq Z$ is the set of terminal nodes reachable by playing action $a \in A(I)$ at $I$, whereas
%
%, whereas $Z^c(I,a) \coloneqq Z(I) \setminus Z(I,a)$ is the set of the terminal nodes which are reachable by playing an action different from $a$ at $I$.
%
$Z^\bot(I,a) \coloneqq Z(I,a) \setminus \bigcup_{J \in \C(I,a)} Z(J)$ is the set of those reachable by playing $a$ at $I$ without traversing any other player $i$'s infoset.

\subsection{Nash Equilibrium and Its Refinements}

Given an EFG, players' behavior strategies $\{ \beta_i \}_{i \in N}$ constitute an NE if no player has an incentive to unilaterally deviate from the equilibrium by playing another strategy~\citep{nash1951non}.
%
% An NE may prescribe to play sub-optimal actions off the equilibrium path, \emph{i.e.}, at those infosets unreachable given the equilibrium strategies.
%
% These issues are usually amended by introducing \emph{refinements} of the concept of NE (see \citep{van1991stability} for more details).
%
% We focus on trembling-hand perfection and the related \emph{perfect equilibrium} (PE) defined by~\citet{selten1975reexamination}.
%
The PE defined by~\citet{selten1975reexamination}
%
% This NE refinement 
relies on the idea of introducing \emph{trembles} in the game, representing the possibility that players may take non-equilibrium actions with small, vanishing probability.
Trembles are encoded by means of Selten's \emph{perturbed games}, which force lower bounds on the probabilities of playing actions.
Given an EFG $\Gamma$, a pair $(\Gamma, \eta)$ defines a perturbed game, where $\eta : A \to (0,1)$ is a function assigning a positive lower bound $\eta(a)$ on the probability of playing each action $a \in A$, with $\sum_{a \in A(I)} \eta(a) < 1$ for every $i \in N$ and $I \in \I_i$.
Then:
%\citet{selten1975reexamination} provides the following definition:

\begin{definition}\label{def:pe}
	Given an EFG $\Gamma$, $\{ \beta_i \}_{i \in N}$ is a PE of $\Gamma$ if it is a limit point of NEs for at least one sequence of perturbed games $\{(\Gamma,\eta_t)\}_{t \in \mathbb{N}}$ such that, for all $a \in A$, the lower bounds $\eta_t(a)$ converge to zero as $t \rightarrow \infty$.
\end{definition}

% In a PE, each player considers both opponents' and her own future trembles.
%
% Other  trembling-hand-based NE refinements can be defined by considering different ways of accounting for trembles.
%
% For instance, the \emph{quasi-perfect} equilibrium of~\citet{van1984relation} models a scenario where each player only accounts for the opponents' future trembles.

% \paragraph{Computational aspects}
%
% The problem of computing NEs in EFGs has received a lot of attention during recent years, especially in the setting of zero-sum games (\emph{i.e.}, EFGs where $\sum_{i \in N}u_i(z) = 0$ for all $z \in Z$).
%
% This lead to the development of superhuman AIs to play complex imperfect-information games, such as no-limit Texas hold'em poker~\citep{brown2018superhuman,brown2019superhuman}.
%
% However, 
There are only a few computational works on NE refinements.
For instance, \citet{miltersen2010computing} characterize quasi-perfect equilibria of $2$-player EFGs using the sequence form (see the recent work by~\citet{gatti2020characterization} for its extension to $n$-player games) and exploit this to compute an equilibrium by solving a linear complementarity problem with trembles defined as polynomials of some parameter treated symbolically.
%
%\citet{gatti2020characterization} generalize their analysis to $n$-player EFGs, whereas~
\citet{DBLP:conf/aaai/Farina017} do the same for the PE.
%
% FOOTNOTE ON POLYNOMIALS
%
%~\footnote{All the computational works on NE refinements assume that trembles are defined as polynomial functions of some parameter $\epsilon$ that go to zero as $\epsilon \rightarrow 0$. This assumption is w.l.o.g. in two-player EFGs, while, in games with more than two players, there might be equilibria obtained only with rational functions, see~\citep{gatti2020characterization}.}
%
Recently,~\citet{farina2018practical} provide a general framework for computing NE refinements in $2$-player zero-sum EFGs in polynomial time.
The authors show how to reduce the task to the more general problem of solving \emph{trembling} LPs parametrized by some parameter $\epsilon$, \emph{i.e.}, finding their limit solutions as $\epsilon \rightarrow 0$.
Then, they provide a general polynomial-time algorithm to find limit solutions to trembling LPs.
Other works study the problem of computing (approximate) NE refinements in $2$-player zero-sum EFGs by employing online convex optimization techniques~\citep{DBLP:conf/ijcai/KroerFS17,farina2017regret}.

\subsection{Correlation in Extensive-Form Games}

% Correlation in games is customarily modeled thorough a trusted external mediator that privately recommends actions to the players.
%
% This is represented by means of a \emph{correlation device} defined as a probability distribution $\mu \in \Delta_{\Pi}$.~\footnote{In this work, we stick to general, un-reduced strategy profiles since, as discussed in the Appendix, these are necessary for the characterization of trembling-hand perfect correlated equilibria.}
%
We model a correlation device as a probability distribution $\mu \in \Delta_{\Pi}$.
%
%~\footnote{In this work, we stick to general, un-reduced strategy profiles since, as showed in the \textcolor{red}{Supplemental Material}, these are necessary for the characterization of trembling-hand perfect CEs.} \textcolor{red}{*** Nella footnote e poi sotto quando parli dell'EFCE usi reduced e un-reduced che non sono stati introdotti prima. Io metterei una reference sulla reduced normal form e forse una riga per dire cosa rimuove rispetto alla non ridotto. Altra cosa: io sposterei sotto il contenuto della footnote, dicendo nel testo che von Stengel puo' limitarsi alla reduced, mentre noi no ***}
%
% In the classical \emph{correlated equilibrium} (CE) introduced by~\citet{aumann1974subjectivity}, the correlation device draws a strategy profile $\pi \in \Pi$ according to $\mu$, and, then, it privately communicates $\pi_i$ to each player $i \in N$.
%
In the classical CE by~\citet{aumann1974subjectivity}, the correlation device draws a strategy profile $\pi \in \Pi$ according to $\mu$; then, it privately communicates $\pi_i$ to each player $i \in N$.
%
% Then, $\mu$ constitutes a CE if no player wants to defect from the recommended strategy.
%
This notion of CE does \emph{not} fit well to EFGs, as it requires the players to reason over the exponentially-sized set $\Pi_i$.
%
% This makes dealing with CEs largely computationally intractable.
%
% In order to solve this issue, \citet{von2008extensive} introduce a different notion of CE that is specifically tailored to EFGs, called \emph{extensive-form correlated equilibrium} (EFCE).
%
\citet{von2008extensive} introduced the EFCE to solve this issue.
The first crucial feature of the EFCE is a different way of giving recommendations: the strategy $\pi_i$ is revealed to player $i$ as the game progresses, \emph{i.e.}, the player is recommended to play the action $\pi_i(I)$ at infoset $I \in \I_i$ only when $I$ is actually reached during play.
The second key aspect characterizing EFCEs is that, whenever a player decides to defect from a recommended action at some infoset, then she may choose any move at her subsequent infosets and she stops receiving recommendations from the correlation device.
%
% ~\footnote{Another possibility is that the players continue receiving recommendations even when deviating, resulting in an \emph{agent-form correlated equilibrium}~\citep{von2008extensive}.}
%
The definition of EFCE introduced by~\citet{von2008extensive} (Definition~\ref{def:efce}) requires the introduction of the notion of \emph{extended game} with a correlation device.

\begin{definition}\label{def:ext_game}
	Given an EFG $\Gamma$ and a distribution $\mu \in \Delta_{\Pi}$, the extended game $\Gamma^{\textnormal{ext}}(\mu)$ is a new EFG in which chance first selects $\pi \in \Pi$ according to $\mu$, and, then, $\Gamma$ is played with each player $i \in N$ receiving the recommendation to play $\pi_i(I)$ as a signal, whenever she reaches an infoset $I \in \I_i$.
\end{definition}

The signaling in $\Gamma^{\textnormal{ext}}(\mu)$ induces a new infoset structure.
Specifically, every infoset $I \in \I_i$ of the original game $\Gamma$ corresponds to many, new infosets in $\Gamma^{\textnormal{ext}}(\mu)$, one for each combination of possible action recommendations received at the infosets preceding $I$ (this included).
At each new infoset, player $i$ can only distinguish among chance moves corresponding to strategy profiles $\pi \in \Pi$ that differ in the recommendations at infosets $J \in \I_i: J \preceq I$.
%, while she is \emph{not} able to distinguish among moves corresponding to the same recommendations.
%
Figure~\ref{fig:extended_game} shows a simple EFG with its corresponding extended game.

\begin{figure*}[!htp]\centering
	{
		\begin{minipage}[b]{2cm}\centering%
			\def\done{.55*1.2}
			\def\dtwo{.45*1.2}
			\def\dleaf{.2*1.2}
			\def\dvert{.9*1.2}
			\begin{tikzpicture}[baseline=-1.1cm,scale=.95]
			\node[fill=black,draw=black,circle,inner sep=.5mm] (A) at (0, 0) {};
			\node[fill=white,draw=black,circle,inner sep=.5mm] (X) at ($(-\done,-\dvert)$) {};
			\node[fill=white,draw=black,inner sep=.6mm] (l5) at ($(\done,-\dvert)$) {};
			\node[fill=black,draw=black,circle,inner sep=.5mm] (B1) at ($(X) + (-\dtwo, -\dvert)$) {};
			\node[fill=black,draw=black,circle,inner sep=.5mm] (B2) at ($(X) + (\dtwo, -\dvert)$) {};
			\node[fill=white,draw=black,inner sep=.6mm] (l1) at ($(B1) + (-\dleaf, -\dvert)$) {};
			\node[fill=white,draw=black,inner sep=.6mm] (l2) at ($(B1) + (\dleaf, -\dvert)$) {};
			\node[fill=white,draw=black,inner sep=.6mm] (l3) at ($(B2) + (-\dleaf, -\dvert)$) {};
			\node[fill=white,draw=black,inner sep=.6mm] (l4) at ($(B2) + (\dleaf, -\dvert)$) {};
			
			\draw[semithick] (A) --node[fill=white,inner sep=.9] {\scriptsize$a$} (X);
			\draw[semithick] (A) --node[fill=white,inner sep=.9] {\scriptsize$b$} (l5);
			\draw[semithick] (B1) --node[fill=white,inner sep=.9] {\scriptsize$c$} (l1);
			\draw[semithick] (B1) --node[fill=white,inner sep=.9] {\scriptsize$d$} (l2);
			\draw[semithick] (B2) --node[fill=white,inner sep=.9] {\scriptsize$c$} (l3);
			\draw[semithick] (B2) --node[fill=white,inner sep=.9] {\scriptsize$d$} (l4);
			\draw[semithick] (X) --node[fill=white,inner sep=.9] {\scriptsize$m$} (B1);
			\draw[semithick] (X) --node[fill=white,inner sep=.9] {\scriptsize$n$} (B2);
			
			\draw[black!60!white] (X) circle (.2);
			\node[black!60!white]  at ($(X) + (-.2,.38)$) {\textsc{x}};
			
			\draw[black!60!white] (A) circle (.2);
			\node[black!60!white]  at ($(A) + (-.4, 0)$) {\textsc{i}};
			
			\draw[black!60!white] ($(B1) + (0, .2)$) arc (90:270:.2);
			\draw[black!60!white] ($(B1) + (0, .2)$) -- ($(B2) + (0, .2)$);
			\draw[black!60!white] ($(B1) + (0, -.2)$) -- ($(B2) + (0, -.2)$);
			\draw[black!60!white] ($(B2) + (0, -.2)$) arc (-90:90:.2);
			\node[black!60!white]  at ($(B1) + (-.2, .4)$) {\textsc{j}};
			
			\end{tikzpicture}
	\end{minipage}}
	\hspace{.5cm}
	{\begin{minipage}[b]{13.5cm}\centering
			\def\dchance{1.4*1.2}
			\def\dchancehalf{.7*1.2}
			\def\dvertchance{1.1*1.2}
			\def\done{.3*1.2}
			\def\dtwo{.3*1.2}
			\def\dleaf{.15*1.2}
			\def\dvert{.5*1.2}
			\def\dverthigh{0.9*1.2}
			\def\dvertlow{.25*1.2}
			\begin{tikzpicture}[baseline=-1.1cm,scale=.95]
			% Chance node
			\node[fill=white,draw=black,inner sep=1mm] (Ch) at (0, 0) {};
			
			% Tree 1
			\node[fill=black,draw=black,circle,inner sep=.5mm] (A) at ($(Ch) + (-\dchance -\dchance -\dchance -\dchancehalf, -\dvertchance)$) {};
			\node[fill=white,draw=black,circle,inner sep=.5mm] (X) at ($(A) + (-\done,-\dvert)$) {};
			\node[fill=white,draw=black,inner sep=.5mm] (l5) at ($(A) + (\done,-\dvertlow)$) {};
			\node[fill=black,draw=black,circle,inner sep=.5mm] (B1) at ($(X) + (-\dtwo, -\dverthigh)$) {};
			\node[fill=black,draw=black,circle,inner sep=.5mm] (B2) at ($(X) + (\dtwo, -\dverthigh)$) {};
			\node[fill=white,draw=black,inner sep=.5mm] (l1) at ($(B1) + (-\dleaf, -\dvert)$) {};
			\node[fill=white,draw=black,inner sep=.5mm] (l2) at ($(B1) + (\dleaf, -\dvert)$) {};
			\node[fill=white,draw=black,inner sep=.5mm] (l3) at ($(B2) + (-\dleaf, -\dvert)$) {};
			\node[fill=white,draw=black,inner sep=.5mm] (l4) at ($(B2) + (\dleaf, -\dvert)$) {};
			\draw[semithick] (Ch) --node[fill=white,sloped,inner sep=.2] {\scriptsize$\mu(ac,m)$} (A);
			\draw[very thick] (A) -- (X);
			\draw[semithick] (A) -- (l5);
			\draw[very thick] (B1) -- (l1);
			\draw[semithick] (B1) -- (l2);
			\draw[very thick] (B2) -- (l3);
			\draw[semithick] (B2) -- (l4);
			\draw[very thick] (X) -- (B1);
			\draw[semithick] (X) -- (B2);
			
			% Tree 2
			\node[fill=black,draw=black,circle,inner sep=.5mm] (A2) at ($(Ch) + (-\dchance -\dchance -\dchancehalf, -\dvertchance)$) {};
			\node[fill=white,draw=black,circle,inner sep=.5mm] (X2) at ($(A2) + (-\done,-\dverthigh)$) {};
			\node[fill=white,draw=black,inner sep=.5mm] (l52) at ($(A2) + (\done,-\dvertlow)$) {};
			\node[fill=black,draw=black,circle,inner sep=.5mm] (B12) at ($(X2) + (-\dtwo, -\dvert)$) {};
			\node[fill=black,draw=black,circle,inner sep=.5mm] (B22) at ($(X2) + (\dtwo, -\dvert)$) {};
			\node[fill=white,draw=black,inner sep=.5mm] (l12) at ($(B12) + (-\dleaf, -\dvert)$) {};
			\node[fill=white,draw=black,inner sep=.5mm] (l22) at ($(B12) + (\dleaf, -\dvert)$) {};
			\node[fill=white,draw=black,inner sep=.5mm] (l32) at ($(B22) + (-\dleaf, -\dvert)$) {};
			\node[fill=white,draw=black,inner sep=.5mm] (l42) at ($(B22) + (\dleaf, -\dvert)$) {};
			\draw[semithick] (Ch) --node[fill=white,sloped,inner sep=.2] {\scriptsize$\mu(ac,n)$} (A2);
			\draw[very thick] (A2) -- (X2);
			\draw[semithick] (A2) -- (l52);
			\draw[very thick] (B12) -- (l12);
			\draw[semithick] (B12) -- (l22);
			\draw[very thick] (B22) -- (l32);
			\draw[semithick] (B22) -- (l42);
			\draw[semithick] (X2) -- (B12);
			\draw[very thick] (X2) -- (B22);
			
			% Tree 3
			\node[fill=black,draw=black,circle,inner sep=.5mm] (A3) at ($(Ch) + (-\dchance  -\dchancehalf, -\dvertchance)$) {};
			\node[fill=white,draw=black,circle,inner sep=.5mm] (X3) at ($(A3) + (-\done,-\dvert)$) {};
			\node[fill=white,draw=black,inner sep=.5mm] (l53) at ($(A3) + (\done,-\dvertlow)$) {};
			\node[fill=black,draw=black,circle,inner sep=.5mm] (B13) at ($(X3) + (-\dtwo, -\dverthigh)$) {};
			\node[fill=black,draw=black,circle,inner sep=.5mm] (B23) at ($(X3) + (\dtwo, -\dverthigh)$) {};
			\node[fill=white,draw=black,inner sep=.5mm] (l13) at ($(B13) + (-\dleaf, -\dvert)$) {};
			\node[fill=white,draw=black,inner sep=.5mm] (l23) at ($(B13) + (\dleaf, -\dvert)$) {};
			\node[fill=white,draw=black,inner sep=.5mm] (l33) at ($(B23) + (-\dleaf, -\dvert)$) {};
			\node[fill=white,draw=black,inner sep=.5mm] (l43) at ($(B23) + (\dleaf, -\dvert)$) {};
			\draw[semithick] (Ch) --node[fill=white,sloped,inner sep=.2] {\scriptsize$\mu(ad,m)$} (A3);
			\draw[very thick] (A3) -- (X3);
			\draw[semithick] (A3) -- (l53);
			\draw[semithick] (B13) -- (l13);
			\draw[very thick] (B13) -- (l23);
			\draw[semithick] (B23) -- (l33);
			\draw[very thick] (B23) -- (l43);
			\draw[very thick] (X3) -- (B13);
			\draw[semithick] (X3) -- (B23);
			
			% Tree 4
			\node[fill=black,draw=black,circle,inner sep=.5mm] (A4) at ($(Ch) + (-\dchancehalf, -\dvertchance)$) {};
			\node[fill=white,draw=black,circle,inner sep=.5mm] (X4) at ($(A4) + (-\done,-\dverthigh)$) {};
			\node[fill=white,draw=black,inner sep=.5mm] (l54) at ($(A4) + (\done,-\dvertlow)$) {};
			\node[fill=black,draw=black,circle,inner sep=.5mm] (B14) at ($(X4) + (-\dtwo, -\dvert)$) {};
			\node[fill=black,draw=black,circle,inner sep=.5mm] (B24) at ($(X4) + (\dtwo, -\dvert)$) {};
			\node[fill=white,draw=black,inner sep=.5mm] (l14) at ($(B14) + (-\dleaf, -\dvert)$) {};
			\node[fill=white,draw=black,inner sep=.5mm] (l24) at ($(B14) + (\dleaf, -\dvert)$) {};
			\node[fill=white,draw=black,inner sep=.5mm] (l34) at ($(B24) + (-\dleaf, -\dvert)$) {};
			\node[fill=white,draw=black,inner sep=.5mm] (l44) at ($(B24) + (\dleaf, -\dvert)$) {};
			\draw[semithick] (Ch) --node[fill=white,sloped,inner sep=.2] {\scriptsize$\mu(ad,n)$} (A4);
			\draw[very thick] (A4) -- (X4);
			\draw[semithick] (A4) -- (l54);
			\draw[semithick] (B14) -- (l14);
			\draw[very thick] (B14) -- (l24);
			\draw[semithick] (B24) -- (l34);
			\draw[very thick] (B24) -- (l44);
			\draw[semithick] (X4) -- (B14);
			\draw[very thick] (X4) -- (B24);
			
			% Tree 5
			\node[fill=black,draw=black,circle,inner sep=.5mm] (A5) at ($(Ch) + (\dchancehalf, -\dvertchance)$) {};
			\node[fill=white,draw=black,circle,inner sep=.5mm] (X5) at ($(A5) + (-\done,-\dvert)$) {};
			\node[fill=white,draw=black,inner sep=.5mm] (l55) at ($(A5) + (\done,-\dvertlow)$) {};
			\node[fill=black,draw=black,circle,inner sep=.5mm] (B15) at ($(X5) + (-\dtwo, -\dverthigh)$) {};
			\node[fill=black,draw=black,circle,inner sep=.5mm] (B25) at ($(X5) + (\dtwo, -\dverthigh)$) {};
			\node[fill=white,draw=black,inner sep=.5mm] (l15) at ($(B15) + (-\dleaf, -\dvert)$) {};
			\node[fill=white,draw=black,inner sep=.5mm] (l25) at ($(B15) + (\dleaf, -\dvert)$) {};
			\node[fill=white,draw=black,inner sep=.5mm] (l35) at ($(B25) + (-\dleaf, -\dvert)$) {};
			\node[fill=white,draw=black,inner sep=.5mm] (l45) at ($(B25) + (\dleaf, -\dvert)$) {};
			\draw[semithick] (Ch) --node[fill=white,sloped,inner sep=.2] {\scriptsize$\mu(bc,m)$} (A5);
			\draw[semithick] (A5) -- (X5);
			\draw[very thick] (A5) -- (l55);
			\draw[very thick] (B15) -- (l15);
			\draw[semithick] (B15) -- (l25);
			\draw[very thick] (B25) -- (l35);
			\draw[semithick] (B25) -- (l45);
			\draw[very thick] (X5) -- (B15);
			\draw[semithick] (X5) -- (B25);
			
			% Tree 6
			\node[fill=black,draw=black,circle,inner sep=.5mm] (A6) at ($(Ch) + (\dchancehalf + \dchance, -\dvertchance)$) {};
			\node[fill=white,draw=black,circle,inner sep=.5mm] (X6) at ($(A6) + (-\done,-\dverthigh)$) {};
			\node[fill=white,draw=black,inner sep=.5mm] (l56) at ($(A6) + (\done,-\dvertlow)$) {};
			\node[fill=black,draw=black,circle,inner sep=.5mm] (B16) at ($(X6) + (-\dtwo, -\dvert)$) {};
			\node[fill=black,draw=black,circle,inner sep=.5mm] (B26) at ($(X6) + (\dtwo, -\dvert)$) {};
			\node[fill=white,draw=black,inner sep=.5mm] (l16) at ($(B16) + (-\dleaf, -\dvert)$) {};
			\node[fill=white,draw=black,inner sep=.5mm] (l26) at ($(B16) + (\dleaf, -\dvert)$) {};
			\node[fill=white,draw=black,inner sep=.5mm] (l36) at ($(B26) + (-\dleaf, -\dvert)$) {};
			\node[fill=white,draw=black,inner sep=.5mm] (l46) at ($(B26) + (\dleaf, -\dvert)$) {};
			\draw[semithick] (Ch) --node[fill=white,sloped,inner sep=.2] {\scriptsize$\mu(bc,n)$} (A6);
			\draw[semithick] (A6) -- (X6);
			\draw[very thick] (A6) -- (l56);
			\draw[very thick] (B16) -- (l16);
			\draw[semithick] (B16) -- (l26);
			\draw[very thick] (B26) -- (l36);
			\draw[semithick] (B26) -- (l46);
			\draw[semithick] (X6) -- (B16);
			\draw[very thick] (X6) -- (B26);
			
			% Tree 7
			\node[fill=black,draw=black,circle,inner sep=.5mm] (A7) at ($(Ch) + (\dchancehalf + \dchance + \dchance, -\dvertchance)$) {};
			\node[fill=white,draw=black,circle,inner sep=.5mm] (X7) at ($(A7) + (-\done,-\dvert)$) {};
			\node[fill=white,draw=black,inner sep=.5mm] (l57) at ($(A7) + (\done,-\dvertlow)$) {};
			\node[fill=black,draw=black,circle,inner sep=.5mm] (B17) at ($(X7) + (-\dtwo, -\dverthigh)$) {};
			\node[fill=black,draw=black,circle,inner sep=.5mm] (B27) at ($(X7) + (\dtwo, -\dverthigh)$) {};
			\node[fill=white,draw=black,inner sep=.5mm] (l17) at ($(B17) + (-\dleaf, -\dvert)$) {};
			\node[fill=white,draw=black,inner sep=.5mm] (l27) at ($(B17) + (\dleaf, -\dvert)$) {};
			\node[fill=white,draw=black,inner sep=.5mm] (l37) at ($(B27) + (-\dleaf, -\dvert)$) {};
			\node[fill=white,draw=black,inner sep=.5mm] (l47) at ($(B27) + (\dleaf, -\dvert)$) {};
			\draw[semithick] (Ch) --node[fill=white,sloped,inner sep=.2] {\scriptsize$\mu(bd,m)$} (A7);
			\draw[semithick] (A7) -- (X7);
			\draw[very thick] (A7) -- (l57);
			\draw[semithick] (B17) -- (l17);
			\draw[very thick] (B17) -- (l27);
			\draw[semithick] (B27) -- (l37);
			\draw[very thick] (B27) -- (l47);
			\draw[very thick] (X7) -- (B17);
			\draw[semithick] (X7) -- (B27);
			
			% Tree 8
			\node[fill=black,draw=black,circle,inner sep=.5mm] (A8) at ($(Ch) + (\dchancehalf + \dchance + \dchance + \dchance, -\dvertchance)$) {};
			\node[fill=white,draw=black,circle,inner sep=.5mm] (X8) at ($(A8) + (-\done,-\dverthigh)$) {};
			\node[fill=white,draw=black,inner sep=.5mm] (l58) at ($(A8) + (\done,-\dvertlow)$) {};
			\node[fill=black,draw=black,circle,inner sep=.5mm] (B18) at ($(X8) + (-\dtwo, -\dvert)$) {};
			\node[fill=black,draw=black,circle,inner sep=.5mm] (B28) at ($(X8) + (\dtwo, -\dvert)$) {};
			\node[fill=white,draw=black,inner sep=.5mm] (l18) at ($(B18) + (-\dleaf, -\dvert)$) {};
			\node[fill=white,draw=black,inner sep=.5mm] (l28) at ($(B18) + (\dleaf, -\dvert)$) {};
			\node[fill=white,draw=black,inner sep=.5mm] (l38) at ($(B28) + (-\dleaf, -\dvert)$) {};
			\node[fill=white,draw=black,inner sep=.5mm] (l48) at ($(B28) + (\dleaf, -\dvert)$) {};
			\draw[semithick] (Ch) --node[fill=white,sloped,inner sep=.2] {\scriptsize$\mu(bd,n)$} (A8);
			\draw[semithick] (A8) -- (X8);
			\draw[very thick] (A8) -- (l58);
			\draw[semithick] (B18) -- (l18);
			\draw[very thick] (B18) -- (l28);
			\draw[semithick] (B28) -- (l38);
			\draw[very thick] (B28) -- (l48);
			\draw[semithick] (X8) -- (B18);
			\draw[very thick] (X8) -- (B28);

			% Infosets
			\draw[black!60!white] ($(A) + (0, .2)$) arc (90:270:.2);
			\draw[black!60!white] ($(A) + (0, .2)$) -- ($(A4) + (0, .2)$);
			\draw[black!60!white] ($(A) + (0, -.2)$) -- ($(A4) + (0, -.2)$);
			\draw[black!60!white] ($(A4) + (0, -.2)$) arc (-90:90:.2);
			\node[black!60!white]  at ($(A) + (-.5, .3)$) {\scriptsize$\langle \textsc{i},a \rangle$};
			\draw[black!60!white] ($(A5) + (0, .2)$) arc (90:270:.2);
			\draw[black!60!white] ($(A5) + (0, .2)$) -- ($(A8) + (0, .2)$);
			\draw[black!60!white] ($(A5) + (0, -.2)$) -- ($(A8) + (0, -.2)$);
			\draw[black!60!white] ($(A8) + (0, -.2)$) arc (-90:90:.2);
			\node[black!60!white]  at ($(A8) + (.5, .3)$) {\scriptsize$\langle \textsc{i},b \rangle$};
			
			\draw[black!60!white] ($(X) + (0, .2)$) arc (90:270:.2);
			\draw[black!60!white] ($(X) + (0, .2)$) -- ($(X7) + (0, .2)$);
			\draw[black!60!white] ($(X) + (0, -.2)$) -- ($(X7) + (0, -.2)$);
			\draw[black!60!white] ($(X7) + (0, -.2)$) arc (-90:90:.2);
			\node[black!60!white]  at ($(X) + (-.5, .35)$) {\scriptsize$\langle \textsc{x},m \rangle$};
			\draw[black!60!white] ($(X2) + (0, .2)$) arc (90:270:.2);
			\draw[black!60!white] ($(X2) + (0, .2)$) -- ($(X8) + (0, .2)$);
			\draw[black!60!white] ($(X2) + (0, -.2)$) -- ($(X8) + (0, -.2)$);
			\draw[black!60!white] ($(X8) + (0, -.2)$) arc (-90:90:.2);
			\node[black!60!white]  at ($(X8) + (.6, .3)$) {\scriptsize$\langle \textsc{x},n \rangle$};
			
			\draw[black!60!white] ($(B1) + (0, .2)$) arc (90:270:.2);
			\draw[black!60!white] ($(B1) + (0, .2)$) -- ($(B22) + (0, .2)$);
			\draw[black!60!white] ($(B1) + (0, -.2)$) -- ($(B22) + (0, -.2)$);
			\draw[black!60!white] ($(B22) + (0, -.2)$) arc (-90:90:.2);
			\node[black!60!white,fill=white,inner sep=.2]  at ($(B12) + (-.5, .2)$) {\scriptsize$\langle \textsc{j},ac \rangle$};
			\draw[black!60!white] ($(B13) + (0, .2)$) arc (90:270:.2);
			\draw[black!60!white] ($(B13) + (0, .2)$) -- ($(B24) + (0, .2)$);
			\draw[black!60!white] ($(B13) + (0, -.2)$) -- ($(B24) + (0, -.2)$);
			\draw[black!60!white] ($(B24) + (0, -.2)$) arc (-90:90:.2);
			\node[black!60!white,fill=white,inner sep=.2]  at ($(B14) + (-.5, .2)$) {\scriptsize$\langle \textsc{j},ad \rangle$};
			\draw[black!60!white] ($(B15) + (0, .2)$) arc (90:270:.2);
			\draw[black!60!white] ($(B15) + (0, .2)$) -- ($(B26) + (0, .2)$);
			\draw[black!60!white] ($(B15) + (0, -.2)$) -- ($(B26) + (0, -.2)$);
			\draw[black!60!white] ($(B26) + (0, -.2)$) arc (-90:90:.2);
			\node[black!60!white,fill=white,inner sep=.2]  at ($(B16) + (-.5, .2)$) {\scriptsize$\langle \textsc{j},bc \rangle$};
			\draw[black!60!white] ($(B17) + (0, .2)$) arc (90:270:.2);
			\draw[black!60!white] ($(B17) + (0, .2)$) -- ($(B28) + (0, .2)$);
			\draw[black!60!white] ($(B17) + (0, -.2)$) -- ($(B28) + (0, -.2)$);
			\draw[black!60!white] ($(B28) + (0, -.2)$) arc (-90:90:.2);
			\node[black!60!white,fill=white,inner sep=.2]  at ($(B18) + (-.5, .2)$) {\scriptsize$\langle \textsc{j},bd \rangle$};
			
			\end{tikzpicture}
	\end{minipage}}
	\caption{(\emph{Left}) An EFG $\Gamma$. (\emph{Right}) The extended game $\Gamma^\textnormal{ext}(\mu)$. The white square node at the root is a chance node, where each action corresponds to some $\pi \in \Pi$ and is labeled with its probability $\mu(\pi)$. Infosets in $\Gamma^\textnormal{ext}(\mu)$ are identified by pairs. For instance, infoset $\langle \textsc{j},ac \rangle$ corresponds to $\textsc{j}$ when being recommended to play $a$ and $c$ at $\textsc{i}$ and $\textsc{j}$, respectively. Thick actions represent players' behavior when following recommendations (for the ease of reading, action names are omitted).}\label{fig:extended_game}
\end{figure*}
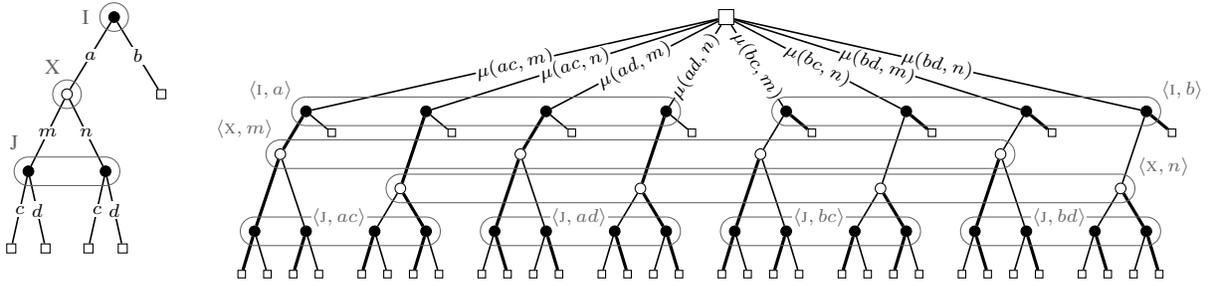

\begin{definition}\label{def:efce}
	Given an EFG $\Gamma$, $\mu \in \Delta_{\Pi}$ defines an EFCE of $\Gamma$ if following recommendations is an NE of $\Gamma^{\textnormal{ext}}(\mu)$.~\footnote{For EFCEs, one can restrict the attention to distributions $\mu$ over \emph{reduced} strategy profiles, \emph{i.e.}, those in which each player's pure strategy only specifies actions at infosets reachable given that player's moves~\citep{vermeulen1998reduced}. In the following, we stick to general, un-reduced strategy profiles since, as showed in Appendix~\ref{app:reduced}, these are necessary for trembling-hand perfect CEs in order to define the players' behavior off the equilibrium path.}
\end{definition}

%This characterization of EFCEs is \emph{not} practical from a computational angle. \textcolor{red}{*** perche' e' esponenzialmente grande? ***}
%%
Next, we introduce an equivalent characterization of EFCEs~\citep{DBLP:conf/aaai/FarinaBS20}.
It is based on the following concept of \emph{trigger agent}, originally due to~\citet{DBLP:conf/icml/GordonGM08}.

\begin{definition}\label{def:trigger}
	Given an infoset $I \in \I_i$ of player $i \in N$, an action $a \in A(I)$, and a distribution $\hat \mu_i \in \Delta_{\Pi_i(I)}$, an \emph{$(I, a, \hat \mu_i)$-trigger agent for player $i$} is an agent that takes on the role of player $i$ and follows all recommendations unless she reaches $I$ and gets recommended to play $a$.
	If this happens, she stops committing to recommendations and plays according to a strategy sampled from $\hat \mu_i$ until the game ends.
\end{definition}

% Intuitively, each $(I, a, \hat \mu_i)$-trigger agent corresponds to an infoset of  extended games $\Gamma^{\textnormal{ext}}(\mu)$: the one associated with infoset $I$ in $\Gamma$ and the recommendations prescribing to play so as to reach $I$ at infosets $J \in \I_i: J \prec I$, and action $a$ at $I$.
%
Then, it follows that $\mu \in \Delta_\Pi$ is an EFCE if, for every $i \in N$, player $i$'s expected utility when following recommendations is at least as large as the expected utility that any $(I, a, \hat \mu_i)$-trigger agent for player $i$ can achieve (assuming the opponents' do not deviate from recommendations).
We provide a formal statement in Appendix~\ref{app:trigger}.

\paragraph{Computing EFCEs in $n$-player EFGs}
%
% \citet{huang2008computing} provide a polynomial-time algorithm to find an EFCE in any EFG (with an arbitrary number of players, chance included) by devising a variation of the \emph{ellipsoid against hope} (EAH) algorithm introduced by~\citet{papadimitriou2008computing} to find a CE in compactly represented games.
%
The algorithm of~\citet{huang2008computing} relies on the following LP formulation of the problem of finding an EFCE, which has exponentially many variables and polynomially many constraints (for completeness, its derivation is in Appendix~\ref{app:lp}).
\begin{subequations}\label{prob:primal_ellip}
	\begin{align}
		\max_{\boldsymbol{\mu} \geq \boldsymbol{0}, \boldsymbol{v}} & \quad \sum_{\pi \in \Pi} \mu[\pi] \quad \textnormal{s.t.} \\
		& A \boldsymbol{\mu} + B \boldsymbol{v }\geq \boldsymbol{0},
	\end{align}
\end{subequations}
where $\boldsymbol{\mu}$ is a vector of variables $\mu[\pi]$ for $\pi \in \Pi$, encoding a probability distribution $\mu \in \Delta_{\Pi}$.
Problem~\ref{prob:primal_ellip} does not enforce any simplex constraint on variables $\mu[\pi]$, and, thus, it is either unbounded or it has an optimal solution with value zero (by setting $\boldsymbol{\mu}$ and $\boldsymbol{v}$ to zero).
In the former case, any feasible $\boldsymbol{\mu}$ encodes an EFCE after normalizing it.
% it so that it defines a probability distribution over $\Pi$.
%
As a result, since an EFCE always exists~\citep{von2008extensive}, the following dual of Problem~\ref{prob:primal_ellip} is always infeasible:
\begin{subequations}\label{prob:dual_ellip}
	\begin{eqnarray}
	A^\top \boldsymbol{y} &\leq& -\boldsymbol{1} \\
	B^\top \boldsymbol{y} &=& \textcolor{white}{-}\boldsymbol{0} \\
	\boldsymbol{y} &\geq& \textcolor{white}{-}\boldsymbol{0},
	\end{eqnarray}
\end{subequations}
where $\boldsymbol{y}$ is a vector of dual variables.
The EAH approach applies the ellipsoid algorithm~\citep{grotschel1993geometric} to Problem~\ref{prob:dual_ellip} in order to conclude that it is infeasible.
Since there are exponentially many constraints, the algorithm runs in polynomial time only if a polynomial-time separation oracle is available.
This is given by the following: 
\begin{lemma}[Lemma~5, \citep{huang2008computing}]\label{lem:sep_ro}
	If $\boldsymbol{y} \geq \boldsymbol{0}$ is such that $B^\top \boldsymbol{y} = \boldsymbol{0}$, then there exists $\boldsymbol{\mu}$ encoding a product distribution $\mu \in \Delta_{\Pi}$ such that $\boldsymbol{\mu}^\top A^\top \boldsymbol{y} = 0$.
	Moreover, $\boldsymbol{\mu}$ can be computed in polynomial time.
\end{lemma}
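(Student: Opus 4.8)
The plan is to exploit the product structure of $\mu$ together with the flow-type conditions encoded by $B^\top \boldsymbol{y} = \boldsymbol{0}$, mimicking the stationary-distribution argument that~\citet{papadimitriou2008computing} use for correlated equilibria in normal-form games. First I would recast the goal. Since $\boldsymbol{\mu}^\top A^\top \boldsymbol{y} = \boldsymbol{y}^\top A \boldsymbol{\mu}$ and $B^\top \boldsymbol{y} = \boldsymbol{0}$, for \emph{any} choice of the auxiliary vector $\boldsymbol{v}$ we have $\boldsymbol{y}^\top A \boldsymbol{\mu} = \boldsymbol{y}^\top (A\boldsymbol{\mu} + B \boldsymbol{v})$. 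Choosing $\boldsymbol{v}$ to be the continuation values actually induced by $\mu$ makes each entry of $A\boldsymbol{\mu} + B\boldsymbol{v}$ equal to the expected gain of the corresponding $(I,a,\cdot)$-trigger agent when every player follows the product distribution $\mu$. Thus it suffices to exhibit a product $\mu$ for which the $\boldsymbol{y}$-weighted sum of all trigger-agent incentives vanishes.

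Next I would decompose $\boldsymbol{y}$ across players, writing its entries as nonnegative weights $y_{I,a}$ indexed by $(i, I, a)$ with $I \in \I_i$ and $a \in A(I)$, and read off what $B^\top \boldsymbol{y} = \boldsymbol{0}$ says about them. Because the columns of $B$ come from the recursive definition of the continuation-value variables $\boldsymbol{v}$ over the infoset hierarchy, the equalities $B^\top \boldsymbol{y} = \boldsymbol{0}$ amount, for each player $i$ and each infoset $I \in \I_i$, to a conservation law relating the weight flowing into $I$ to the weights $y_{I,a}$ distributed over its actions and on to the children infosets in $\C(I,a)$. I would use these conservation laws to build, for each player $i$, a behavior strategy $\beta_i$ by setting $\beta_i(a)$ proportional to $y_{I,a}$ at every infoset $I$ with positive incoming flow (and playing arbitrarily, e.g.\ uniformly, where the flow is zero, since such infosets carry zero weight and cannot affect the weighted sum). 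By perfect recall and Kuhn's theorem, $\beta_i$ corresponds to a mixed strategy $\mu_i \in \Delta_{\Pi_i}$, and I set $\mu \coloneqq \prod_{i \in N} \mu_i$, i.e.\ $\mu[\pi] = \prod_{i \in N} \mu_i(\pi_i)$.

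It then remains to verify the cancellation $\boldsymbol{y}^\top A \boldsymbol{\mu} = 0$. Here I would argue player by player: fixing $i$ and treating the opponents' product $\mu_{-i}$ as a fixed continuation environment, the $\boldsymbol{y}$-weighted sum of player $i$'s trigger incentives telescopes over the tree. The term in which a trigger agent keeps following recommendations at $I$ and the term in which it deviates on $a$ are both weighted by $y_{I,a}$; because $\beta_i(a) \propto y_{I,a}$, the ``following'' contribution at each infoset matches the aggregated ``deviating'' contribution handed down to $\C(I,a)$, and the flow-conservation equalities make consecutive levels cancel, leaving a net of zero. I expect this cancellation identity to be the crux: it requires pinning down the exact form of $A$ and $B$ from the LP derivation (in particular how the trigger agent's continuation value is expressed through $\boldsymbol{v}$) and tracking the recursion along every root-to-infoset path, whereas the construction and the polynomial-time claim are comparatively routine. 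Finally, polynomial running time follows since the conservation laws form a linear system solvable in polynomial time, the normalization at each infoset is immediate, and the passage from $\beta_i$ to a polynomially-sized (sequence-form) representation of $\mu_i$ is standard.
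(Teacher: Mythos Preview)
The paper does not itself prove this lemma---it is quoted from \citet{huang2008computing}---but the proof sketch it gives for the analogous Lemma~\ref{lem:lemmasep} describes the intended approach: one builds an auxiliary two-player zero-sum EFG in which player~1 picks a profile $\pi\in\Pi$ and player~2 picks a trigger agent, meaning an infoset $I$, a trigger action $a$, \emph{and} a full specification of actions at every $J\succeq I$. The condition $B^\top\boldsymbol{y}=\boldsymbol{0}$ is precisely what makes $\boldsymbol{y}$ a valid (sequence-form) strategy for player~2 in this auxiliary game, and the entry $(A^\top\boldsymbol{y})[\pi]$ is player~2's expected payoff against the pure choice $\pi$. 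The Hart--Schmeidler minimax argument then shows the auxiliary game has value zero and that player~1 has an optimal product strategy $\mu$, computable in polynomial time.

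Your proposal is in the right spirit---Papadimitriou's stationary-distribution construction is the normal-form cousin of the Hart--Schmeidler argument---but there is a concrete gap in how you index the dual. You write the entries of $\boldsymbol{y}$ as weights $y_{I,a}$ indexed by $(i,I,a)$, whereas the primal inequalities (Constraints~\eqref{eq:dev_mu}, after the substitutions that yield $A\boldsymbol{\mu}+B\boldsymbol{v}\ge\boldsymbol{0}$) are indexed by the full tuple $(i,I,a,J,a')$: each trigger $(I,a)$ carries an entire subtree of deviation variables, one for every following infoset $J\succeq I$ and action $a'\in A(J)$. Accordingly, $B^\top\boldsymbol{y}=\boldsymbol{0}$ is not a single flow-conservation law per infoset of $\Gamma$; it is a sequence-form constraint system on player~2's (much larger) strategy tree in the auxiliary game. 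Setting ``$\beta_i(a)$ proportional to $y_{I,a}$'' is therefore underspecified, and the telescoping you outline does not go through without first unpacking this richer structure---which is exactly what the auxiliary-game framing accomplishes. If you rework your argument at the level of tuples $(i,I,a,J,a')$ and interpret $\boldsymbol{y}$ as a sequence-form strategy for the deviator, you will essentially recover Huang's proof.
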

\noindent
\citet{jiang2015polynomial} show how, given a product distribution $\mu$ computed as in Lemma~\ref{lem:sep_ro}, it is possible to recover, in polynomial time, a violated constraint for Problem~\ref{prob:dual_ellip}, corresponding to some strategy profile $\pi \in \Pi$.
%
% FOOTONOTE VIOLATED CONSTRAINT
%
% ~\footnote{The original algorithm proposed by~\citet{huang2008computing} does not perform this additional step. However, as shown by~\citet{jiang2015polynomial}, the algorithm may \emph{not} run in polynomial time if, instead of using a violated constraint of Problem~\ref{prob:dual_ellip}, it employs the new, violated inequality defined as $\boldsymbol{\mu}^\top A^\top \boldsymbol{y} \leq -\boldsymbol{1}$. Indeed, one can derive an EFCE by solving the dual of the problem defined by the violated inequalities $\boldsymbol{\mu}^\top A^\top \boldsymbol{y} \leq -\boldsymbol{1}$. However, this may require exponential time since the constraint coefficients of this problem may \emph{not} be representable with a polynomial (in the size of the game) number of bits. Thus, the additional step introduced by~\citet{jiang2015polynomial} is needed to have a polynomial overall running time.}
%
This, together with some additional technical tricks ensuring that $B^\top \boldsymbol{y} = \boldsymbol{0}$ holds (see~\citep{huang2008computing} for more details), allows to apply the ellipsoid algorithm to Problem~\ref{prob:dual_ellip} in polynomial time.
Since the problem is infeasible, the algorithm must terminate after polynomially many iterations with a collection of violated constraints, which correspond to polynomially many strategy profiles.
Then, solving (in polynomial time) Problem~\ref{prob:primal_ellip} with the variables $\boldsymbol{\mu}$ restricted to these strategy profiles gives an EFCE of the game. 
Let us also remark that the EFCE obtained in this way has support size polynomial in the size of the game.

\section{Trembling-Hand Perfection and Correlation}\label{sec:perf_corr}

We are now ready to show how trembling-hand perfection can be injected into the definition of EFCE so as to amend its weaknesses off the equilibrium path (see the following for an example).
We generalize the approach of~\citet{dhillon1996perfect} (restricted to CEs in normal-form games) to the general setting of EFCEs in EFGs.
The core idea is to use the PE rather than the NE in the definition of CE.
Thus:
%
% The key difference with respect to the definition of EFCE is that NEs are substituted with PEs. \textcolor{red}{*** il PE e' definito anche in forma normale, tu stai dando una perturbazione sulle behavior strategies del gioco esteso. Per come lo hai scritto, sembra quasi che quello che fai e' una diretta applicazione di Dhillon e Mertens, invece perturbi alla Selten ***}

\begin{definition}\label{def:efpce}
	Given an EFG $\Gamma$, a distribution $\mu \in \Delta_{\Pi}$ is an \emph{extensive-form perfect correlated equilibrium (EFPCE)} if following recommendations is a PE of $\Gamma^{\textnormal{ext}}(\mu)$.
\end{definition}

The definition of EFPCE crucially relies on the introduction of trembles in extended games, \emph{i.e.}, it takes into account the possibility that each player may not follow action recommendations with a small, vanishing probability.
In the following, given a perturbed EFG $(\Gamma, \eta)$ and $\mu \in \Delta_\Pi$, we denote with $(\Gamma^{\textnormal{ext}}(\mu), \eta)$ a perturbed extended game in which the probability of playing each action is subject to a lower bound equal to the lower bound $\eta(a)$ of the corresponding action $a \in A$ in $\Gamma$.
%
% FOOTNOTE SU LOWER BOUND DIVERSI
%
%~\footnote{In general, in extended games $\Gamma^{\textnormal{ext}}(\mu)$ we could define different lower bounds for the different occurrences of the same action $a \in A$ in $\Gamma$. In order to keep the presentation as simple as possible, we do not address this general case here. However, our results can be straightforwardly extended to work with different lower bounds.}
%
By recalling the definition of PE (Definition~\ref{def:pe}) and the structure of perturbed extended games, it is easy to infer the following characterization of EFPCEs:

\begin{lemma}\label{lem:efpce_limit_point}
	Given an EFG $\Gamma$, a distribution $\mu \in \Delta_{\Pi}$ is an EFPCE of $\Gamma$ if following recommendations constitutes NEs for at least one sequence of perturbed extended games $\{ \Gamma^{\textnormal{ext}}(\mu), \eta_t) \}_{t \in \mathbb{N}}$ such that, for all $a \in A$, the lower bounds $\eta_t(a)$ converge to zero as $t \rightarrow \infty $.
\end{lemma}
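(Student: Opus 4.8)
The plan is to prove the characterization by directly unfolding the two definitions it combines, Definition~\ref{def:efpce} and Definition~\ref{def:pe}, and then translating the abstract notion of a perturbed game into the concrete structure of the extended game. First I would recall that, by Definition~\ref{def:efpce}, $\mu$ is an EFPCE exactly when the behavior profile ``following recommendations'' is a PE of $\Gamma^{\textnormal{ext}}(\mu)$. Then, instantiating Definition~\ref{def:pe} with $\Gamma^{\textnormal{ext}}(\mu)$ in place of $\Gamma$, this holds precisely when following recommendations is a limit point of NEs of some sequence of perturbed games of $\Gamma^{\textnormal{ext}}(\mu)$ whose action lower bounds all vanish. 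The whole argument then reduces to showing that the relevant perturbed games are exactly the tied perturbed extended games $(\Gamma^{\textnormal{ext}}(\mu),\eta_t)$ introduced above, and that their vanishing condition coincides with $\eta_t(a)\to 0$ for every $a\in A$.

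The central step is to relate the action/infoset structure of $\Gamma^{\textnormal{ext}}(\mu)$ to that of $\Gamma$. By Definition~\ref{def:ext_game} and the ensuing discussion, every infoset of $\Gamma^{\textnormal{ext}}(\mu)$ has the form $\langle I,\rho\rangle$ for some original infoset $I\in\I_i$ and some history $\rho$ of recommendations received up to $I$, and its action set is a copy of $A(I)$; the only additional move is the root chance node selecting $\pi\sim\mu$, which is not subject to trembles. Identifying each action at $\langle I,\rho\rangle$ with the corresponding original action $a\in A(I)$, a tied lower-bound function $\eta:A\to(0,1)$ assigns the same bound $\eta(a)$ to all copies of $a$. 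I would then check that this yields a legitimate perturbed game: the feasibility requirement $\sum_{a\in A(I)}\eta(a)<1$ at every extended infoset $\langle I,\rho\rangle$ is exactly the condition imposed on $\eta$ in the perturbed EFG $(\Gamma,\eta)$, and it holds simultaneously for all $\rho$ since they share the action set $A(I)$. Conversely, the lower bounds of $\Gamma^{\textnormal{ext}}(\mu)$ vanish if and only if $\eta_t(a)\to 0$ for all $a\in A$, because every perturbed action of the extended game is a copy of some $a\in A$. Chaining these equivalences with the two definitions yields the statement.

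I expect the delicate point to be justifying that restricting attention to tied perturbations is legitimate, i.e.\ that no limit profile is lost relative to allowing each copy $\langle I,\rho\rangle$ of an action its own independent lower bound. For the direction asserted in the lemma this is immediate, because a tied perturbation is a special case of a perturbation of $\Gamma^{\textnormal{ext}}(\mu)$, so any limit of NEs of the games $(\Gamma^{\textnormal{ext}}(\mu),\eta_t)$ is by definition a PE and hence witnesses an EFPCE. For the converse inclusion the argument rests on the modelling interpretation of the trembles: they represent a player's mistakes in executing a recommended action, which are a property of the action itself and therefore consistent across the recommendation histories $\rho$ indexing the copies $\langle I,\rho\rangle$. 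This consistency is precisely what is encoded in the definition of $(\Gamma^{\textnormal{ext}}(\mu),\eta)$, so the tied perturbations capture exactly the perturbed extended games intended by Definition~\ref{def:efpce}. The remaining verifications — that the root chance move and the new signalling infosets do not interfere with the limiting argument, and that the vanishing of $\eta_t$ transfers correctly to every copy — are routine once the structural correspondence above is in place.
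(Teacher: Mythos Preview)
Your proposal is correct and matches the paper's approach: the paper does not give a formal proof of this lemma, saying only that it is ``easy to infer'' from Definition~\ref{def:pe}, Definition~\ref{def:efpce}, and the structure of perturbed extended games. Your careful unfolding of these definitions --- in particular the observation that every action at an extended infoset $\langle I,\rho\rangle$ is a copy of some $a\in A(I)$, so that tied perturbations $(\Gamma^{\textnormal{ext}}(\mu),\eta_t)$ are legitimate perturbed games of $\Gamma^{\textnormal{ext}}(\mu)$ whose lower bounds vanish exactly when $\eta_t(a)\to 0$ for all $a\in A$ --- is precisely the reasoning the paper leaves implicit.
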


We remark that, with an abuse of terminology, we say that players follow recommendations in a perturbed extended game $(\Gamma^{\textnormal{ext}}(\mu), \eta)$ whenever they play strategies which place all the residual probability (given lower bounds) on recommended actions.
In the following sections, we crucially rely on the characterization of EFPCEs given in Lemma~\ref{lem:efpce_limit_point} in order to derive our computational results.
First, we show an example of EFPCE and prove some of its properties.

\paragraph{Example of EFPCE}
Consider the EFG in Figure~\ref{fig:example_game}(\emph{Left}) and lower bounds $\eta_t: A \to (0,1)$ for $t \in \mathbb{N}$, with $\eta_t(a) \rightarrow 0$ as $t \rightarrow \infty$ for all $a \in A$.
First, notice that player $1$ is always better off playing action $a$ at the root infoset $\textsc{i}$, since she can guarantee herself a utility of $1$ by selecting $c$ at the following infoset $\textsc{j}$, while she can achieve at most $\frac{1}{2}$ by playing $b$.
Thus, any EFPCE of the game (as well as any EFCE) must recommend $a$ at $\textsc{i}$ with probability $1$.
%, since there is no way player $1$ can be incentivized to play $b$.
%
Then, in the sub-game reached when playing $a$ at $\textsc{i}$, it is easy to check that recommending the pairs of actions $(c,m)$, $(c,n)$, and $(d,m)$ each with probability $\frac{1}{3}$ is an equilibrium, as no player has an incentive to deviate from a recommendation, even with trembles (see Appendix~\ref{app:example} for more details).
The correlation device described fo far is sufficient to define an EFCE, as recommendations at infosets $\textsc{y}$, $\textsc{k}$, and $\textsc{l}$ are \emph{not} relevant given that they do not influence players' utilities at the equilibrium ($b$ is never recommended).
However, they become relevant for EFPCEs, since, in perturbed extended games, these infosets could be reached due to a tremble with probability $\eta_t(b)$.
Then, player $2$ must be told to play $p$ at $\textsc{y}$, because her utility is always $1$ if she plays $p$, while it is always $0$ for $o$.
Moreover, with an analogous reasoning, player $1$ must be recommended to play $e$ and $h$ at $\textsc{k}$ and $\textsc{l}$, respectively.
In conclusion, we can state that $\mu \in \Delta_{\Pi}:\mu(aceh,mp) = \mu(aceh,np) = \mu(adeh,mp) = \frac{1}{3}$ is an EFPCE.
%
% As it is easy to verify, this distribution enforces optimality off the equilibrium path.

\paragraph{Properties of EFPCEs}
%
% The following theorems 
We characterize the relation between EFPCEs and other equilibria, also showing that EFPCEs always exist and represent a refinement of EFCEs.~\footnote{In the following, we denote the sets of equilibria with their corresponding acronyms (\emph{e.g.}, \textsf{NE} is the set of all NEs of a game).}

\begin{restatable}{theorem}{thmrelations}\label{thm:relations}
	This relation holds:
	$
		\textnormal{\textsf{PE}} \subseteq \textnormal{\textsf{EFPCE}} \subseteq \textnormal{\textsf{EFCE}}.
	$
\end{restatable}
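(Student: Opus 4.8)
The plan is to prove the two inclusions separately, treating $\textsf{EFPCE} \subseteq \textsf{EFCE}$ first since it is the more immediate one. Given $\mu \in \Delta_{\Pi}$ that is an EFPCE, following recommendations is by Definition~\ref{def:efpce} a PE of $\Gamma^{\textnormal{ext}}(\mu)$. I would then invoke the standard fact that every PE is an NE: writing a PE as a limit point $\lim_t \beta^t$ of NEs $\beta^t$ of perturbed games $(\Gamma^{\textnormal{ext}}(\mu),\eta_t)$ with $\eta_t \to 0$ (Definition~\ref{def:pe}), any profitable deviation against the limit profile would, by continuity of the expected utilities in the strategies, remain profitable against $\beta^t_{-i}$ for $t$ large after replacing it with a feasible nearby strategy of $(\Gamma^{\textnormal{ext}}(\mu),\eta_t)$, contradicting that $\beta^t$ is an NE of that perturbed game. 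Hence following recommendations is an NE of $\Gamma^{\textnormal{ext}}(\mu)$, which is exactly the condition for $\mu$ to be an EFCE (Definition~\ref{def:efce}).

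For $\textsf{PE} \subseteq \textsf{EFPCE}$, let $\{\beta_i\}_{i \in N}$ be a PE of $\Gamma$, witnessed by perturbed games $(\Gamma,\eta_t)$ with NEs $\beta^t \to \beta$ and $\eta_t \to 0$. I would construct the correlation device as the \emph{product} distribution $\mu = \prod_{i \in N} \mu_i$, where each $\mu_i$ is the mixed strategy realization-equivalent to $\beta_i$ (which exists by perfect recall). The crux is a \emph{decoupling} property of product devices: when all players follow recommendations, the history $\rho$ of recommendations received by player $i$ up to any infoset $\langle I, \rho\rangle$ of $\Gamma^{\textnormal{ext}}(\mu)$ is, by independence of $\mu$, uninformative about the opponents' play, so player $i$'s belief over the nodes of $I$ and her entire continuation problem coincide with the corresponding problem in $\Gamma$ when the opponents play $\beta_{-i}$. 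Consequently any deviation in the extended game — in particular any $(I,a,\hat\mu_i)$-trigger agent of Definition~\ref{def:trigger} — is payoff-equivalent to a behavior deviation in $\Gamma$ against $\beta_{-i}$, and this equivalence lifts to the perturbed games once $\eta_t$ is lifted to $\Gamma^{\textnormal{ext}}(\mu)$ by assigning each extended-game action the lower bound of its original counterpart, as in Definition~\ref{def:ext_game} and the subsequent discussion. I would then transfer the NE conditions satisfied by $\beta^t$ in $(\Gamma,\eta_t)$ through this equivalence to conclude that following recommendations is an NE of the perturbed extended game for each $t$, so that Lemma~\ref{lem:efpce_limit_point} certifies $\mu$ as an EFPCE.

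The main obstacle is precisely this last transfer at the \emph{split} infosets. Each original infoset $I$ corresponds to many infosets $\langle I,\rho\rangle$ in $\Gamma^{\textnormal{ext}}(\mu)$, and following recommendations places the residual probability on whichever action $\mu_i$ drew for $I$, so I must verify that \emph{every} action in the support of $\beta_i$ at $I$ is a perturbed best continuation response, not merely that $\beta_i$ is optimal in the limit. Here the product structure is indispensable: it pins down the beliefs at $\langle I,\rho\rangle$ entirely through $\beta^t_{-i}$, and since $\beta^t \to \beta$, any action with $\beta_i(a) > 0$ has $\beta^t_i(a)$ eventually strictly above its lower bound, forcing it to be a tied best response in $(\Gamma,\eta_t)$. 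The delicate point is reconciling the behavior induced by following recommendations from the fixed limit device $\mu$ with the witnessing equilibria $\beta^t$, which I expect to handle by choosing the lifted perturbations so that following recommendations reproduces the $\beta^t$ optimality conditions at each infoset, and otherwise passing to a subsequence.
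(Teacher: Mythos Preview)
Your proposal follows essentially the same route as the paper: the inclusion $\textsf{EFPCE} \subseteq \textsf{EFCE}$ is obtained from $\textsf{PE} \subseteq \textsf{NE}$ applied in $\Gamma^{\textnormal{ext}}(\mu)$, and for $\textsf{PE} \subseteq \textsf{EFPCE}$ the paper likewise builds $\mu$ as the product $\mu(\pi) = \prod_{i \in N} \prod_{I \in \I_i} \beta_i(\pi_i(I))$, lifts the witnessing perturbations $\eta_t$ to the extended game, and argues that the profile which places all residual probability on the recommended action is an NE of each $(\Gamma^{\textnormal{ext}}(\mu), \eta_t)$, concluding via the limit characterization (Lemma~\ref{lem:efpce_limit_point}). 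The split-infoset transfer that you single out as the main obstacle is exactly the step the paper dispatches in a single sentence (``Given the definition of $\mu$ and the fact that each $\{\beta_{i,t}\}_{i \in N}$ constitutes an NE for the perturbed game $(\Gamma,\eta_t)$, we can conclude that the behavior strategies defined above constitute NEs for the perturbed extended games''), so your outline is at least as detailed as the paper's own argument on this point.
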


\begin{restatable}{theorem}{thmrelationsne}\label{thm:relations_ne}
	The following relations hold:
	\begin{itemize}
		\item $\textnormal{\textsf{EFPCE}} \not\subset \textnormal{\textsf{NE}}$ and $\textnormal{\textsf{NE}} \not\subset \textnormal{\textsf{EFPCE}}$;
		\item $\textnormal{\textsf{EFPCE}} \cap \textnormal{\textsf{NE}} = \textnormal{\textsf{PE}}$.
	\end{itemize}
\end{restatable}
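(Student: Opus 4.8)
The plan is to establish the set equality $\textnormal{\textsf{EFPCE}} \cap \textnormal{\textsf{NE}} = \textnormal{\textsf{PE}}$ first, since the non-inclusion $\textnormal{\textsf{NE}} \not\subset \textnormal{\textsf{EFPCE}}$ follows from it. Throughout, I identify an NE (resp.\ PE), which is a behavior profile $\{\beta_i\}_{i \in N}$, with the \emph{product} distribution $\mu = \bigtimes_{i \in N} \mu_i \in \Delta_\Pi$ it induces via Kuhn's theorem; thus $\mu \in \textnormal{\textsf{NE}}$ holds exactly when $\mu$ is a product distribution whose induced profile is an NE of $\Gamma$. For the inclusion $\textnormal{\textsf{PE}} \subseteq \textnormal{\textsf{EFPCE}} \cap \textnormal{\textsf{NE}}$, membership in $\textnormal{\textsf{EFPCE}}$ is exactly Theorem~\ref{thm:relations}, while $\textnormal{\textsf{PE}} \subseteq \textnormal{\textsf{NE}}$ holds because a PE is a limit of NEs of perturbed games $(\Gamma,\eta_t)$ with $\eta_t \to 0$, and upper hemicontinuity of the NE correspondence makes any such limit an NE of $\Gamma$.

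The reverse inclusion $\textnormal{\textsf{EFPCE}} \cap \textnormal{\textsf{NE}} \subseteq \textnormal{\textsf{PE}}$ is the crux. Take $\mu \in \textnormal{\textsf{EFPCE}} \cap \textnormal{\textsf{NE}}$; being an NE, $\mu$ is a product distribution $\bigtimes_i \mu_i$ with induced profile $\beta = \{\beta_i\}$. By Lemma~\ref{lem:efpce_limit_point} there is a sequence $\{(\Gamma^{\textnormal{ext}}(\mu),\eta_t)\}_{t \in \mathbb{N}}$, $\eta_t \to 0$, in each of which following recommendations is an NE. The key step is a \emph{decoupling} claim: because $\mu$ is a product distribution, the recommendation a player receives at any extended infoset $\langle I, a\rangle$ is stochastically independent of the opponents' recommendations and play, so it conveys no information beyond the action it prescribes. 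Hence the continuation values governing best responses at $\langle I,a\rangle$ in $(\Gamma^{\textnormal{ext}}(\mu),\eta_t)$ coincide with those at $I$ in the perturbed original game $(\Gamma,\eta_t)$, and the profile $\beta^t$ induced in $\Gamma$ by following-with-trembles (each recommended action played with its residual probability, each other action with its lower bound) is a valid $\eta_t$-perturbed strategy — one checks that the induced probability of every action strictly exceeds its lower bound. I would then argue that the extended-game equilibrium conditions, including the trigger-style deviations of Definition~\ref{def:trigger} (which for a product distribution cannot beat an ordinary continuation, since recommendations are uninformative), force every recommended action in the support of $\beta_i$ at $I$ to be a best response, so $\beta^t$ is an NE of $(\Gamma,\eta_t)$. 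As $\eta_t \to 0$ gives $\beta^t \to \beta$, the profile $\beta$ is a limit point of perturbed-game NEs, i.e.\ a PE by Definition~\ref{def:pe}. I expect this decoupling claim to be the main obstacle: one must make precise, at the level of perturbed extended games, that the product structure lets best responses and trigger-agent deviations factor through the original game, and verify that the induced perturbed strategies respect the bounds and place their free probability exactly on best responses.

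Finally, the non-inclusions. For $\textnormal{\textsf{EFPCE}} \not\subset \textnormal{\textsf{NE}}$, the EFPCE exhibited for the game of Figure~\ref{fig:example_game}, namely $\mu(aceh,mp)=\mu(aceh,np)=\mu(adeh,mp)=\tfrac13$, is \emph{not} a product distribution (its marginals would assign positive mass to $(adeh,np)$, which carries mass $0$), so it cannot be an NE. For $\textnormal{\textsf{NE}} \not\subset \textnormal{\textsf{EFPCE}}$, it suffices to produce an NE that is not a PE: in the same game, any NE that prescribes the sub-optimal action $g$ at the off-path infoset $\textsc{l}$ (where $h$ strictly dominates $g$ for player $1$) is an NE but not a PE, and by the equality just proven such a profile lies in $\textnormal{\textsf{NE}} \setminus (\textnormal{\textsf{EFPCE}} \cap \textnormal{\textsf{NE}})$, hence outside $\textnormal{\textsf{EFPCE}}$.
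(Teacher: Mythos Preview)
Your proposal is correct and follows essentially the same approach as the paper, though you supply considerably more detail on the decoupling argument for $\textnormal{\textsf{EFPCE}} \cap \textnormal{\textsf{NE}} \subseteq \textnormal{\textsf{PE}}$ than the paper's two-sentence assertion that a product $\mu$ together with the EFPCE definition immediately yields a PE. The only variation is that for $\textnormal{\textsf{NE}} \not\subset \textnormal{\textsf{EFPCE}}$ the paper argues directly---an NE with $\beta_1(f)>0$ would recommend the dominated action $f$ at the off-path infoset $\textsc{k}$, from which player~$1$ would profitably defect---whereas you deduce it from the second bullet by exhibiting an NE that is not a PE; both routes are valid.
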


\section{NEs of Perturbed Extended Games}\label{sec:charac_recc_ne}

We provide a characterization of NEs of perturbed extended games $(\Gamma^\textnormal{ext}(\mu), \eta)$, useful for our main algorithmic result on EFPCEs given in the following section.
Specifically, we give a set of easily interpretable conditions which ensure that following recommendations is an NE of $(\Gamma^\textnormal{ext}(\mu), \eta)$.
These are crucial for the derivation of the LP exploited by our algorithm.
Our characterization is inspired by that of EFCEs based on trigger agents (see Lemma~\ref{lem:efce_trigger} in Appendix~\ref{app:trigger}).
However, the presence of trembles in extended games requires some key changes, which we highlight in the following.

First, we introduce some additional notation.
Given a perturbed extended game $(\Gamma^{\textnormal{ext}}(\mu), \eta) $, we let $\xi^\eta(z, \pi)$ be the probability of reaching a node $z \in Z$ when a strategy profile $\pi \in \Pi$ is recommended and players' obey to recommendations, in presence of trembles defined by $\eta$.
Each $\xi^\eta(z, \pi)$ is obtained by multiplying probabilities of actions in $ \sigma(z)$, which are those on the path from the root to $z$.
For each $a \in \sigma(z)$, two cases are possible: either $a$ is prescribed by the recommended $\pi$ and played with its maximum probability given $\eta$, or it is \emph{not}, which means that a tremble occurred with probability $\eta(a)$.
Formally, letting $\mymathbb{1} \{ a \in \pi \}$ be an indicator for the event $a \in \pi$, for every $z \in Z$ and $\pi \in \Pi$:
\begin{equation*}\label{eq:def_xi}
\xi^\eta(z, \pi ) \coloneqq \prod_{a \in A : a \in \sigma(z)}  \eta(a)^{ \mymathbb{1} \{ a \in \pi \} } \tilde{\eta}(a)^{  1 - \mymathbb{1} \{ a \in \pi \}  } p_c(z) ,
\end{equation*}
where, for $a \in A(I)$, we let $\tilde{\eta}(a) \coloneqq 1 - \sum_{a' \neq a \in A(I)} \eta(a')$
be the maximum probability assignable to $a$ given $\eta$.
%, with $I \in \I_i$ being the unique infoset of $i \in N$ such that $a \in A(I)$.
%
Moreover, for every player $i \in N$, infoset $I \in \I_i$, terminal node $z \in Z(I)$ reachable from $I$, and strategy profile $\pi \in \Pi$, we let $\xi^\eta(z, I, \pi ) $ be defined as $\xi^\eta(z, \pi )$ excluding player $i$'s actions leading from $I$ to $z$, \emph{i.e.}, with the product restricted to actions $a \in \sigma_i(I) \cup  \left(  \sigma(z) \setminus A_i \right)$.
Analogously, for every player $i$'s strategy $\pi_i \in \Pi_i(I)$, we let $\xi^\eta(z, \pi_i )$ be defined for player $i$'s actions $a \in A_i \cap  \left( \sigma(z) \setminus \sigma_i(I) \right)$ from $I$ to $z$.
%
%Moreover, for every player $i \in N$, infoset $I \in \I_i$, terminal node $z \in Z(I)$ reachable from $I$, and strategy profile $\pi \in \Pi$ we introduce the following quantity:
%%
%\begin{equation*}\label{eq:def_xi_infoset}
%\xi^\eta(z, I, \pi ) \coloneqq \left( \prod_{a \in \sigma(z): a \notin A_i}  \eta(a)^{ \mymathbb{1} \{ a \in \pi \} } \tilde{\eta}(a)^{  1 - \mymathbb{1} \{ a \in \pi \}  } \right) \left( \prod_{a \in \sigma(I)}  \eta(a)^{ \mymathbb{1} \{ a \in \pi \} } \tilde{\eta}(a)^{  1 - \mymathbb{1} \{ a \in \pi \}  } \right),
%%
%\end{equation*}
%%
%which is defined as $\xi^\eta(z, \pi )$ excluding the probabilities of player $i$'s actions leading from $I$ to $z$.
%%
%Additionally, for every player $i$'s pure strategy $\pi_i \in \Pi_i(I)$, we let:
%%
%\begin{equation*}\label{eq:def_xi_trigger}
%\xi^\eta(z, I, \pi_i ) \coloneqq  \prod_{a \in \sigma(z) \setminus \sigma(I) : a \in A_i}  \eta(a)^{ \mymathbb{1} \{ a \in \pi \} } \tilde{\eta}(a)^{  1 - \mymathbb{1} \{ a \in \pi \}  } ,
%%
%\end{equation*}
%%
%which only accounts for the probabilities of player $i$'s actions leading from $I$ to $z$.

Following recommendations is an NE of the perturbed extended game $(\Gamma^{\textnormal{ext}}(\mu), \eta) $ if, for every player $ i \in N$, infoset $I \in \I_i$, and action $a \in (I)$, player $i$'s utility when obeying to the recommendation $a$ at $I$ is at least as large as the utility achieved by any $(I,a,\hat\mu_i)$-trigger agent.
The fundamental differences with respect to EFCE are: \emph{(i)} an infoset $I$ could be reached even when actions recommended at preceding infosets do not allow it (due to trembles); and \emph{(ii)} trigger agents are subjects to trembles, which means that they may make mistakes while playing the strategy sampled from $\hat \mu_i$.
%
%These differences call for the following definitions.

For any terminal node $z \in Z$, the probability of reaching it when following recommendations is:
\begin{equation*}\label{eq:q_pert}
q^\eta_\mu(z) \coloneqq  \sum_{\pi \in \Pi} \xi^\eta(z,\pi)  \mu(\pi) ,
\end{equation*}
where the summation accounts for the probability of reaching $z$ for every possible $\pi$.
The sum is over $\Pi$ rather than $\Pi(z)$ as for EFCE (see Equation~\eqref{eq:q} in Appendix~\ref{app:trigger}), since, due to trembles, $z$ could be reached even when $\pi \notin \Pi(z)$.

For any $(I,a,\hat\mu_i)$-trigger agent, the probability of reaching $z \in Z(I)$ when the agent `gets triggered' is defined as:
\begin{equation*}\label{p_pert}
p_{\mu,\hat\mu_i}^{\eta, I,a}(z) \hspace{-1mm}\coloneqq\hspace{-1.5mm} \left( \hspace{-0.5mm}\sum_{\substack{\pi_i \in \Pi_i(a)\\\pi_{-i} \in \Pi_{-i}}} \hspace{-3.5mm} \xi^\eta(z,I,\pi) \mu (\pi) \hspace{-1.5mm} \right) \hspace{-2mm} \left( \hspace{-0.5mm}\sum_{\hat \pi_i \in \Pi_i(I)} \hspace{-3.5mm} \xi^\eta(z, \hat \pi_i) \hat \mu_i(\hat \pi_i) \hspace{-1.5mm}  \right) \hspace{-1mm},
\end{equation*}
where the first summation is over $\Pi_i(a)$ instead of $\Pi_i(I,a)$ (as in the EFCE, see Equation~\eqref{eq:p} in Appendix~\ref{app:trigger}) since it might be the case that the agent is activated also when the recommended strategy $\pi_i$ does \emph{not} allow to reach infoset $I$.
%, \emph{i.e.}, when $\pi_i \in \Pi(a) \setminus \Pi_i(I,a)$.
%
Finally, the overall probability of reaching $z \in Z(I)$ is:
\begin{equation*}\label{y_pert}
y_{\mu,\hat\mu_i}^{\eta, I,a}(z) \coloneqq p_{\mu,\hat\mu_i}^{\eta, I,a}(z)  +  \sum_{\substack{ \pi_i \in \Pi_i \setminus \Pi_i(a) \\ \pi_{-i} \in \Pi_{-i} }} \xi^\eta(z,\pi)  \mu(\pi) ,
\end{equation*}
where the first term is for when the agent `gets triggered', while the second term accounts for the case in which the agent is \emph{not} activated (the two events are independent).

\begin{restatable}{theorem}{thmcharacreccne}\label{thm:charac_recc_ne}
	Given a perturbed extended game $(\Gamma^{\textnormal{ext}}(\mu), \eta) $, following recommendations is an NE of the game if for every $i \in N$ and $(I,a,\hat\mu_i)$-trigger agent for player $i$, it holds that:
	\begin{equation*}\label{eq:charac_recc_ne}
	\sum_{z \in Z(I)} \hspace{-1mm}\left[ \hspace{-1mm}  \left( \sum_{\substack{\pi_i \in \Pi_i(a) \\ \pi_{-i} \in \Pi_{-i} }}\hspace{-2mm} \xi^\eta(z,\pi)  \mu(\pi) \hspace{-1mm}\right) \hspace{-1mm}u_i(z) \hspace{-0.5mm} \right] \hspace{-1.5mm}\geq\hspace{-1.5mm} \sum_{z \in Z(I)} \hspace{-1.5mm}p_{\mu,\hat\mu_i}^{\eta, I,a}(z) u_i(z).
	\end{equation*}
\end{restatable}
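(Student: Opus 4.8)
The plan is to show that, for each trigger agent, the stated inequality is \emph{exactly} the condition that following recommendations cannot be improved upon by that trigger-agent deviation, and then to invoke the fact that trigger agents exhaust player $i$'s unilateral deviations in the perturbed extended game. I would mirror the EFCE characterization based on trigger agents (Lemma~\ref{lem:efce_trigger}), but re-derive every reach-probability expression in the presence of trembles, tracking where the differences (i) and (ii) highlighted above enter.

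First, I would fix a player $i \in N$ and argue that, against opponents who follow recommendations, any unilateral deviation of $i$ reduces to a trigger-agent deviation. Along each path, consider the first infoset $I \in \I_i$ at which $i$ defects from her recommendation $a$: before $I$ she obeys, and from $I$ on she receives no further recommendations and plays a free continuation, which defines some $\hat\mu_i \in \Delta_{\Pi_i(I)}$, i.e.\ an $(I,a,\hat\mu_i)$-trigger agent (Definition~\ref{def:trigger}). Processing player $i$'s infosets along the partial order $\prec$ (deepest first) with a one-shot-deviation argument, and using that the payoff is linear in $i$'s sequence-form reach probabilities, it suffices to rule out a profitable deviation at each single $(I,a)$ with its best continuation $\hat\mu_i$; this reduces verifying that following recommendations is an NE of $(\Gamma^{\textnormal{ext}}(\mu),\eta)$ to checking all trigger agents. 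Two perturbation-specific points arise here: (i) because of trembles $I$ may be reached even when the recommendations at the preceding infosets do not permit it, so the trigger fires whenever $a$ is recommended at $I$ irrespective of the rest of $\pi_i$ — this is precisely why the triggered mass is summed over $\Pi_i(a)$ rather than $\Pi_i(I,a)$; and (ii) after switching to $\hat\mu_i$ the agent is still subject to trembles, encoded by the continuation factor $\xi^\eta(z,\hat\pi_i)$ inside $p_{\mu,\hat\mu_i}^{\eta,I,a}(z)$.

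Second, I would compute the utility gap between following recommendations and an $(I,a,\hat\mu_i)$-trigger agent and show it equals the difference of the two sides of the claimed inequality. Under following recommendations the reach probability of a terminal node $z$ is $q^\eta_\mu(z)$, while under the trigger agent it is $y_{\mu,\hat\mu_i}^{\eta,I,a}(z)$. On $Z \setminus Z(I)$ the trigger is never activated, since such paths do not traverse $I$, so the two reach probabilities coincide and their utility contributions cancel. On $Z(I)$ I would split $\Pi = \Pi_i(a) \sqcup (\Pi_i \setminus \Pi_i(a))$ (times $\Pi_{-i}$): the ``not-triggered'' mass over $\Pi_i \setminus \Pi_i(a)$ is identical in $q^\eta_\mu(z)$ and in $y_{\mu,\hat\mu_i}^{\eta,I,a}(z)$ by the very definition of $y$, hence cancels as well, and what survives is exactly $\sum_{z \in Z(I)} [(\sum_{\pi_i \in \Pi_i(a),\, \pi_{-i} \in \Pi_{-i}} \xi^\eta(z,\pi)\mu(\pi)) - p_{\mu,\hat\mu_i}^{\eta,I,a}(z)]\, u_i(z)$, i.e.\ the left-hand side minus the right-hand side of the theorem.

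Putting these together, the claimed inequality for a given $(I,a,\hat\mu_i)$ is equivalent to that trigger agent not improving over following recommendations; hence if it holds for all trigger agents, no unilateral deviation of any player is profitable, and following recommendations is an NE of $(\Gamma^{\textnormal{ext}}(\mu),\eta)$. The main obstacle is the reduction of the second paragraph: justifying that trigger-agent deviations capture all unilateral deviations when trembles force every infoset (including off-path ones) to be reached with positive probability, and that a deviation defecting at several $\prec$-incomparable infosets is dominated by single-trigger deviations. The one-shot-deviation argument along $\prec$ together with payoff linearity in $i$'s reach probabilities is the tool I expect to settle this, but bookkeeping the tremble factors $\eta(a)$ and $\tilde\eta(a)$ consistently across the triggered and not-triggered masses is the delicate part.
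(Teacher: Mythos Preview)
Your proposal is correct and your second step is exactly the paper's proof: the paper starts from $\sum_{z\in Z} q^\eta_\mu(z)u_i(z) \geq \sum_{z\in Z\setminus Z(I)} q^\eta_\mu(z)u_i(z) + \sum_{z\in Z(I)} y_{\mu,\hat\mu_i}^{\eta,I,a}(z)u_i(z)$, cancels the $Z\setminus Z(I)$ contribution, splits $q^\eta_\mu$ over $\Pi_i(a)$ versus $\Pi_i\setminus\Pi_i(a)$, and cancels the non-triggered mass against the corresponding term of $y$, arriving at the stated inequality. The only difference is that the paper simply \emph{asserts} the reduction to trigger agents (leaning on the EFCE trigger-agent characterization adapted to perturbed extended games) without the one-shot-deviation argument you sketch in your first step, so your write-up is, if anything, more complete than the paper's on that point.
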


\section{Computing an EFPCE in $n$-player EFGs}\label{sec:eah}

We provide a polynomial-time algorithm to compute \emph{an} EFPCE in $n$-player EFGs (also with chance).
%
%At a general level, 
The algorithm is built on three fundamental components: \emph{(i)} a trembling LP (with exponentially many variables and polynomially many constraints) whose limit solutions define EFPCEs; \emph{(ii)} an adaption of the algorithm by~\citet{farina2018practical} that finds such limit solutions by solving a sequence of (non-trembling) LPs; and \emph{(iii)} a polynomial-time EAH procedure that solves these LPs.

\begin{figure*}[!htp]\centering
	{
		\begin{minipage}[b]{2.7cm}\centering%
			\def\done{.65*1.2}
			\def\dtwo{.50*1.2}
			\def\dleaf{.25*1.2}
			\def\dvert{.65*1.2}
			\begin{tikzpicture}[baseline=-1.1cm,scale=.95]
			\node[fill=black,draw=black,circle,inner sep=.5mm] (A) at (0, 0) {};
			\node[fill=white,draw=black,circle,inner sep=.5mm] (X) at ($(A) + (-\done,-\dvert)$) {};
			\node[fill=white,draw=black,inner sep=.6mm] (l5) at ($(A) + (\done,-\dvert)$) {};
			\node[fill=black,draw=black,circle,inner sep=.5mm] (C) at ($(X) + (-\dtwo, -\dvert)$) {};
			\node[fill=black,draw=black,circle,inner sep=.5mm] (D) at ($(X) + (\dtwo, -\dvert)$) {};
			\node[fill=white,draw=black,inner sep=.6mm] (l1) at ($(C) + (-\dleaf, -\dvert)$) {};
			\node[fill=white,draw=black,inner sep=.6mm] (l2) at ($(C) + (\dleaf, -\dvert)$) {};
			\node[fill=white,draw=black,inner sep=.6mm] (l3) at ($(D) + (-\dleaf, -\dvert)$) {};
			\node[fill=white,draw=black,inner sep=.6mm] (l4) at ($(D) + (\dleaf, -\dvert)$) {};
			\node[inner sep=0] at ($(l1) + (.23,0)$) {\scriptsize$z_1$};
			\node[inner sep=0] at ($(l2) + (.23,0)$) {\scriptsize$z_2$};
			\node[inner sep=0] at ($(l3) + (.23,0)$) {\scriptsize$z_3$};
			\node[inner sep=0] at ($(l4) + (.23,0)$) {\scriptsize$z_4$};
			\node[inner sep=0] at ($(l5) + (.23,0)$) {\scriptsize$z_5$};

			\draw[semithick] (A) --node[fill=white,inner sep=.9] {\scriptsize$a$} (X);
			\draw[semithick] (A) --node[fill=white,inner sep=.9] {\scriptsize$b$} (l5);
			\draw[semithick] (C) --node[fill=white,inner sep=.9] {\scriptsize$c$} (l1);
			\draw[semithick] (C) --node[fill=white,inner sep=.9] {\scriptsize$d$} (l2);
			\draw[semithick] (D) --node[fill=white,inner sep=.9] {\scriptsize$e$} (l3);
			\draw[semithick] (D) --node[fill=white,inner sep=.9] {\scriptsize$f$} (l4);
			\draw[semithick] (X) --(C);
			\draw[semithick] (X) -- (D);
			
			\draw[black!60!white] (A) circle (.2);
			\node[black!60!white]  at ($(A) + (-.4, 0)$) {\textsc{i}};
			
			\draw[black!60!white] (X) circle (.2);
			
			\draw[black!60!white] (C) circle (.2);
			\node[black!60!white]  at ($(C) + (-.2,.38)$) {\textsc{j}};
			
			\draw[black!60!white] (D) circle (.2);
			\node[black!60!white]  at ($(D) + (.2,.38)$) {\textsc{k}};
			
			\end{tikzpicture}
	\end{minipage}}
	\hspace{.5cm}
	{\begin{minipage}[b]{13cm}\small\centering
			\setlength{\tabcolsep}{1pt}
			\begin{tabular}{c@{\hskip 6pt}l|c@{\hskip 6pt}l}
				& \multicolumn{3}{l}{Constraints~\eqref{eq:dev_mu_pert}~and~\eqref{eq:inc_mu_pert} for player $1$, infoset \textsc{i}, and action $a$} \\
				\midrule\\[-4mm]
				& \multicolumn{3}{l}{$u[1,\textsc{i},a] = v[1,\textsc{i},a,\textsc{i}] - \eta(a) w[1,\textsc{i},a,\textsc{i},a]  - \eta(b) w[1,\textsc{i},a,\textsc{i},b] $} \\
				\midrule\\[-4mm]
				$(\textsc{i},a)$: & \multicolumn{3}{l}{$v[1,\textsc{i},a,\textsc{i}] - w[1,\textsc{i},a,\textsc{i},a] \geq v[1,\textsc{i},a,\textsc{j}] + v[1,\textsc{i},a,\textsc{k}] + $} \\
				& \multicolumn{3}{l}{$\quad - \eta(c) w[1,\textsc{i},a,\textsc{j},c]- \eta(d) w[1,\textsc{i},a,\textsc{j},d] - \eta(e) w[1,I\textsc{i},a,\textsc{k},e]- \eta(f) w[1,\textsc{i},a,\textsc{k},f]$} \\
				$(\textsc{i},b)$ & \multicolumn{3}{l}{$v[1,\textsc{i},a,\textsc{i}] - w[1,\textsc{i},a,\textsc{i},b] \geq U_1(z_5) $} \\
				\midrule\\[-4mm]
				$(\textsc{j},c)$: & $v[1,\textsc{i},a,\textsc{j}] - w[1,\textsc{i},a,\textsc{j},c] \geq U_1(z_1)$ \hspace{3mm} & \hspace{3mm}$(\textsc{k},e)$: & $v[1,\textsc{i},a,\textsc{k}] - w[1,\textsc{i},a,\textsc{k},e] \geq U_1(z_3)$ \\
				$(\textsc{j},d)$: & $v[1,\textsc{i},a,\textsc{j}] - w[1,\textsc{i},a,\textsc{j},d] \geq U_1(z_2)$ \hspace{3mm} & \hspace{3mm}$(\textsc{k},f)$: & $v[1,\textsc{i},a,\textsc{k}] - w[1,\textsc{i},a,\textsc{k},f] \geq U_1(z_4)$ \\
				\bottomrule
			\end{tabular}
	\end{minipage}}
	\caption{(\emph{Left}) Simple EFG. (\emph{Right}) Example of Constraints~\eqref{eq:dev_mu_pert}--\eqref{eq:inc_mu_pert}; we let $ U_1(z) \coloneqq  \sum_{ \pi_1 \in \Pi(a)  \pi_2 \in \Pi_2(z) }  \xi^\eta(z,\textsc{i},\pi) \mu[\pi] u_1(z) $ for every $z \in Z$. Variables $v[1,\textsc{i},a,\cdot]$ encode the optimal utility of trigger agents associated to $\textsc{i},a$ at infosets following \textsc{i} (without trembles). Variables $w[1,\textsc{i},a,\cdot,\cdot]$ account for penalties due to trembles. To see this, fix $\mu[\pi]$. Assume that $U_1(z_4) > U_1(z_3)$ and consider the constraints for $(\textsc{k},e)$ and $(\textsc{k},f)$. Then, it must be $v[1,\textsc{i},a,\textsc{k}] = U_1(z_4)$ and $w[1,\textsc{i},a,\textsc{k},f] = 0$, which implies $w[1,\textsc{i},a,\textsc{k},e] = U_1(z_4) - U_1(z_3)$ (constraint for $(\textsc{i},a)$). Similarly, assuming $U_1(z_2) > U_1(z_1)$, it must be $v[1,\textsc{i},a,\textsc{j}] = U_1(z_2)$, $w[1,\textsc{i},a,\textsc{j},d] = 0$, and $w[1,\textsc{i},a,\textsc{j},c] = U_1(z_2) - U_1(z_1)$. An analogous reasoning holds at infosets upwards in the game tree.}\label{fig:example_lp_pert}
\end{figure*}

\paragraph{Trembling LP for EFPCEs}
It resembles the EFCE LP in Problem~\ref{prob:primal_ellip}.
In this case, the constraints appearing in the LP ensure that following recommendations is an NE in a given sequence of perturbed extended games, by exploiting the characterization given in Theorem~\ref{thm:charac_recc_ne}. 
Then, Lemma~\ref{lem:efpce_limit_point} allows to conclude that the limit solutions of the trembling LP define EFPCEs.
In the following, we assume that a sequence of perturbed extended games $\{ (\Gamma^{\textnormal{ext}}(\mu), \eta_t) \}_{t \in \mathbb{N}}$ is given.
For every player $i \in N$, infoset $I \in \I_p$, and action $a \in A(I)$, we introduce a variable $u[i,I,a]$ to encode player $i$'s expected utility when following the recommendation to play $a$ at $I$ in the perturbed extended game $(\Gamma^{\textnormal{ext}}(\mu), \eta_t)$.
These variables are defined by the following constraints:
\begin{align}\label{eq:u_rec_pert}
	& u[i, I, a] = \sum_{z \in Z(I)} \left(  \sum_{\substack{ \pi_i \in \Pi_i(a) \\ \pi_{-i} \in \Pi_{-i}} } \xi^\eta(z, \pi) \mu[\pi] \right) u_i(z) \\
	& \hspace{4.1cm} \forall  i \in N, \forall I \in \I_i, \forall a \in A(I). \nonumber
\end{align}
Then, we introduce constraints that recursively define variables $v[i,I,a,J]$ for every infoset $J \in \I_i : I \preceq J$. These encode the maximum expected utility obtained at infoset $J$ by trigger agents associated with $I$ and $a$.
% are the analogous of those in Problem~\ref{prob:primal_ellip}, but accounting for players' trembles.
%
To this end, we also need some auxiliary non-negative variables $w[i,I,a,J,a']$, which are defined for every player $i \in N$, infoset $I \in \I_i$, action $a \in A(I)$, infoset $J \in \I_i : I \preceq J$ following $I$ (this included), and action $a' \in A(J)$ available at $J$.
%\textcolor{red}{The constraints are:}
%
\begin{align}
	& v[i,I,a,J] - w[i,I,a,J,a'] \geq \label{eq:dev_mu_pert} \\
	& \hspace{0.3cm} \sum_{z \in Z^\bot(J,a^\prime)  } \left(   \sum_{\substack{ \pi_i \in \Pi_i(a) \\ \pi_{-i} \in \Pi_{-i}(z)  } } \xi^\eta(z,I, \pi) \mu[\pi]  \right) u_i(z)  + \nonumber \\
	& \hspace{0.3cm}\sum_{K \in \C(J, a^\prime) } \hspace{-1.5mm} \left(  v[i,I,a,K] -\hspace{-1.5mm}  \sum_{a'' \in A(K)} \hspace{-1.5mm}  \eta_t(a'') w[i,I,a,K,a'']  \right) \nonumber\\
	& \forall i \in N, \forall I \in \I_i, \forall a \in A(I), \forall J \in \I_i: I \preceq J,  \forall a' \in A(J). \nonumber
\end{align}
Intuitively, each auxiliary variable $w[i,I,a,J,a']$ represents a penalty on $v[i,I,a,J]$ due to the possibility of trembling by playing a (possibly) sub-optimal action $a' \in A(J)$ at $J$.
Indeed, whenever $a'$ is an optimal action at infoset $J$, then $w[i,I,a,J,a']$ is set to $0$ in any solution; otherwise, $w[i,I,a,J,a']$ represents how much utility is lost by playing $a'$ instead of an optimal action (see Figure~\ref{fig:example_lp_pert}(\emph{Right}) for an example).
Finally, the incentive constraints are:
\begin{align}\label{eq:inc_mu_pert}
	& u[i,I,a] = v[i,I,a,I] - \sum_{a' \in A(I)} \eta_t(a') w[i,I,a,I,a'] \\
	&  \hspace{4cm} \forall i \in N, \forall I \in \I_i, \forall a \in A(I). \nonumber
\end{align}
Figure~\ref{fig:example_lp_pert} provides an example of Constraints~\eqref{eq:dev_mu_pert}~and~\eqref{eq:inc_mu_pert} to better clarify their meaning.
The following theorem shows that Constraints~\eqref{eq:u_rec_pert},~\eqref{eq:inc_mu_pert},~and~\eqref{eq:dev_mu_pert} correctly encode the conditions given in Theorem~\ref{thm:charac_recc_ne}, which ensure that following recommendations is an NE in $(\Gamma^{\textnormal{ext}}(\mu), \eta_t)$.
\begin{restatable}{theorem}{thmtremblinglptheorem}\label{thm:trembling_lp_theorem}
	Given a perturbed extended game $(\Gamma^{\textnormal{ext}}(\mu), \eta_t)$, if Constraints~\eqref{eq:u_rec_pert},~\eqref{eq:dev_mu_pert},~and~\eqref{eq:inc_mu_pert} can be satisfied for the vector $\boldsymbol{\mu}$ of variables $\mu[\pi]$ encoding the distribution $\mu$, then following recommendations is an NE of $(\Gamma^{\textnormal{ext}}(\mu), \eta_t)$.
\end{restatable}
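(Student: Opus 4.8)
The plan is to invoke Theorem~\ref{thm:charac_recc_ne} and show that feasibility of Constraints~\eqref{eq:u_rec_pert},~\eqref{eq:dev_mu_pert},~and~\eqref{eq:inc_mu_pert} forces, for every player $i \in N$, infoset $I \in \I_i$, action $a \in A(I)$, and trigger agent $(I,a,\hat\mu_i)$, the inequality appearing in the statement of that theorem, namely that player $i$'s utility when obeying the recommendation $a$ at $I$ dominates the utility of the trigger agent. The left-hand side of that inequality is, by Constraint~\eqref{eq:u_rec_pert}, exactly $u[i,I,a]$, which by Constraint~\eqref{eq:inc_mu_pert} equals $\phi[i,I,a,I]$, where I abbreviate $\phi[i,I,a,J] \coloneqq v[i,I,a,J] - \sum_{a' \in A(J)} \eta_t(a') \, w[i,I,a,J,a']$ for every $J \in \I_i$ with $I \preceq J$. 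Hence it suffices to prove that $\phi[i,I,a,I]$ is an upper bound on the trigger agent's expected utility $\sum_{z \in Z(I)} p^{\eta, I,a}_{\mu,\hat\mu_i}(z)\, u_i(z)$.

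Next, I would decompose the trigger agent's utility recursively along the partial order $\preceq$ on $\I_i$. For each $J$ with $I \preceq J$, let $V(J)$ denote the maximal expected continuation utility a triggered agent can collect from the infosets weakly following $J$, where at each such infoset she places the residual mass $\tilde\eta(\cdot)$ on her intended action and trembles onto the remaining ones with probabilities $\eta_t(\cdot)$. By perfect recall, maximizing over the distributions $\hat\mu_i \in \Delta_{\Pi_i(I)}$ coincides with this infoset-by-infoset behavioral optimum, so that $V(I) = \max_{\hat\mu_i} \sum_{z \in Z(I)} p^{\eta, I,a}_{\mu,\hat\mu_i}(z)\, u_i(z)$. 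Writing $Q(J,a')$ for the right-hand side of Constraint~\eqref{eq:dev_mu_pert} --- the immediate payoff gathered on the nodes $Z^\bot(J,a')$ (carrying precisely the factor $\xi^\eta(z,I,\pi)\mu[\pi]$ of the first factor of $p^{\eta, I,a}_{\mu,\hat\mu_i}$) plus the penalized continuation terms $\phi[i,I,a,K]$ summed over $K \in \C(J,a')$ --- I would prove the invariant $\phi[i,I,a,J] \geq V(J)$ by induction from the $\preceq$-maximal infosets up to $I$.

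The heart of the argument is the inductive step. Reading Constraint~\eqref{eq:dev_mu_pert} as $w[i,I,a,J,a'] \leq v[i,I,a,J] - Q(J,a')$ and combining with $w \geq 0$ yields $v[i,I,a,J] \geq \max_{a'} Q(J,a')$, together with the bound $\phi[i,I,a,J] \geq v[i,I,a,J]\bigl(1 - \sum_{a'} \eta_t(a')\bigr) + \sum_{a'} \eta_t(a')\, Q(J,a')$, whose coefficient $1 - \sum_{a' \in A(J)} \eta_t(a')$ is strictly positive by the definition of a perturbation. Substituting the lower bound $v[i,I,a,J] \geq Q(J,\bar a)$ for the trigger agent's optimal intended action $\bar a$ collapses this expression, via the identity $\tilde\eta(\bar a) = 1 - \sum_{a' \neq \bar a} \eta_t(a')$, into $\tilde\eta(\bar a) Q(J,\bar a) + \sum_{a' \neq \bar a} \eta_t(a')\, Q(J,a')$. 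Finally, the induction hypothesis $\phi[i,I,a,K] \geq V(K)$ at the children gives $Q(J,a') \geq Q^V(J,a')$, the genuine continuation value with $\phi$ replaced by $V$; since all tremble coefficients are non-negative, this propagates to $\phi[i,I,a,J] \geq V(J)$, closing the induction. Evaluating at $J = I$ yields $u[i,I,a] = \phi[i,I,a,I] \geq V(I) \geq \sum_{z \in Z(I)} p^{\eta, I,a}_{\mu,\hat\mu_i}(z)\, u_i(z)$ for every $\hat\mu_i$, which is exactly the condition of Theorem~\ref{thm:charac_recc_ne}.

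I expect the main obstacle to be the bookkeeping in the recursive decomposition: verifying that the immediate-payoff term of $Q(J,a')$ over $Z^\bot(J,a')$ and the multiplicative tremble probabilities of the agent's own actions recombine exactly into the two factors of $p^{\eta, I,a}_{\mu,\hat\mu_i}(z)$ along each root-to-leaf path, and that maximizing over mixed $\hat\mu_i$ legitimately reduces to the behavioral recursion defining $V$. Once that correspondence is pinned down, the algebraic identity relating the $\eta_t$-weighted $w$-penalties to the trigger agent's trembling value is routine.
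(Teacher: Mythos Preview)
Your argument is correct but takes a different route from the paper. The paper derives Constraints~\eqref{eq:u_rec_pert}--\eqref{eq:inc_mu_pert} by writing the inner maximization over trigger agents $\hat\mu_i$ as a sequence-form LP (with realization-plan variables constrained by $x_i[\sigma_i(J)a']\geq\eta_t(a')\,x_i[\sigma_i(J)]$), taking its dual to obtain the $v$ and $w$ variables, and then invoking strong duality; feasibility of the dual with objective value $u[i,I,a]$ immediately gives the required inequality. You instead prove the same bound by a direct backward induction on the partial order $\preceq$, showing $\phi[i,I,a,J]\geq V(J)$ at every $J$ via the algebraic manipulation of $w\in[0,\,v-Q(J,a')]$. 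What you are doing is essentially the weak-duality direction established by hand through the Bellman recursion, without naming it as such. Your approach is more elementary---it avoids the sequence-form LP machinery entirely---at the cost of the explicit bookkeeping you flag in your last paragraph. The paper's duality derivation, on the other hand, explains \emph{where} the $v,w$ constraints come from (they are literally the dual), and via strong duality it simultaneously yields the converse direction (tight characterization), which your one-sided inductive bound does not.
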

\noindent
By substituting the expression of $u[i,I,a]$ (given by Constraints~\eqref{eq:u_rec_pert}~and~\eqref{eq:inc_mu_pert}) into Constraints~\eqref{eq:dev_mu_pert}, we can formulate the following trembling LP parameterized by $t \in \mathbb{N}$:
\begin{subequations}\label{prob:primal_ellip_pert}
	\begin{align}
	\max_{\boldsymbol{\mu} \geq \boldsymbol{0}, \boldsymbol{v}, \boldsymbol{w} \geq \boldsymbol{0}} & \quad \sum_{\pi \in \Pi} \mu[\pi] \quad \textnormal{s.t.} \\
	& A_t \boldsymbol{\mu} + B  \boldsymbol{v} + C_t \boldsymbol{w} \geq \boldsymbol{0}, \label{eq:primal_ellip_pert_cons}
	\end{align}
\end{subequations}
where $A_t $ is the analogous of matrix $A$ in Problem~\ref{prob:primal_ellip}, $\boldsymbol{w}$ is a vector whose components are the variables $w[i,I,a,J,a']$, and $C_t$ is a matrix defining the constraints coefficients for these variables.
Notice that the coefficients of variables in $\boldsymbol{v }$ (as defined by $B$) are the same as in Problem~\ref{prob:primal_ellip}.

\paragraph{Limit Solutions of Trembling LP}
%
% Problem~\ref{prob:primal_ellip_pert} guarantees that following recommendations is an NE of the perturbed extended game $(\Gamma^{\textnormal{ext}}(\mu), \eta_t)$.
%
% Thus, finding an EFPCE amounts to computing a limit solution, as $t \rightarrow \infty$, of Problem~\ref{prob:primal_ellip_pert} (up to normalization of variables $\boldsymbol{\mu}$).
%
Problem~\ref{prob:primal_ellip_pert} can be cast into the framework of~\citet{farina2018practical} by defining sequences of lower bounds $\eta_t$ by means of vanishing polynomials in a parameter $\epsilon \rightarrow 0$.
As a result, the polynomial-time algorithm by~\citet{farina2018practical} can be used, with the only difference that, at each step, for a fixed value of the parameter $\epsilon$ (\emph{i.e.}, particular lower bounds $\eta_t$), it needs to solve an instance of Problem~\ref{prob:primal_ellip_pert} featuring exponentially many variables.
Provided that the latter can be done in polynomial time, the polynomiality of the overall procedure is preserved, since the bounds on the running time provided by~\citet{farina2018practical} do not depend on the number of variables in the LP. 

\paragraph{EAH Procedure}
In order to solve Problem~\ref{prob:primal_ellip_pert} for a particular lower bound function $\eta_t$ in polynomial time, we can apply a procedure similar to the EAH algorithm by~\citet{huang2008computing}.
Notice that Problem~\ref{prob:primal_ellip_pert} is always unbounded, since there always exists a distribution $\mu \in \Delta_\Pi$ such that following recommendations is an NE of the perturbed extended game $(\Gamma^{\textnormal{ext}}(\mu), \eta_t)$ (such $\mu$ is an EFCE of the corresponding perturbed, non-extended game).
Thus, we only need to provide a polynomial-time separation oracle for the always-infeasible dual of Problem~\ref{prob:primal_ellip_pert}, which reads as:
\begin{subequations}\label{prob:dual_ellip_pert}
	\begin{align}
	A_t^\top \boldsymbol{y} \hspace{0.3cm} &\leq \hspace{0.3cm}-\boldsymbol{1} \\
 	B^\top \boldsymbol{y} \hspace{0.3cm}&= \hspace{0.3cm}\textcolor{white}{-}\boldsymbol{0} \\
	C_t^\top \boldsymbol{y} \hspace{0.3cm}&\geq \hspace{0.3cm}\textcolor{white}{-}\boldsymbol{0} \\
	\boldsymbol{y} \hspace{0.3cm}&\geq \hspace{0.3cm}\textcolor{white}{-} \boldsymbol{0},
	\end{align}
\end{subequations}
where the vector of dual variables $\boldsymbol{y}$ has the same role as in Problem~\ref{prob:dual_ellip}, since the constraints of the primal problems are indexed on the same sets.
Notice that constraints $C_t^\top \boldsymbol{y} \geq \boldsymbol{0}$ are polynomially many.
As a result, one can always check whether one of these constraints is violated in polynomial time and, if this is the case, output one such constraint as a violated inequality.
This allows to focus on separation oracles for the other constraints. 
Then, the required one is given by the following lemma, an analogous of Lemma~\ref{lem:sep_ro}.
\begin{restatable}{lemma}{lemmasep}\label{lem:lemmasep}
	If $\boldsymbol{y} \geq \boldsymbol{0}$ is such that $B^\top \boldsymbol{y} = \boldsymbol{0}$, then there exists $\boldsymbol{\mu}$ encoding a product distribution $\mu \in \Delta_{\Pi}$ such that $\boldsymbol{\mu}^\top A_t^\top \boldsymbol{y} = 0$.
	Moreover, $\boldsymbol{\mu}$ can be computed in poly-time.
\end{restatable}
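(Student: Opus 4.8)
The plan is to mirror the construction behind Lemma~\ref{lem:sep_ro} of \citet{huang2008computing}, taking advantage of the fact---remarked just before the statement---that the block $B$ of coefficients of the $\boldsymbol{v}$ variables is \emph{identical} to the one of the non-perturbed Problem~\ref{prob:primal_ellip}. Consequently the hypothesis $B^\top\boldsymbol{y}=\boldsymbol{0}$ carries exactly the same combinatorial meaning as in the EFCE case. Reading off how each variable $v[i,I,a,J]$ enters Constraints~\eqref{eq:dev_mu_pert}---with coefficient $+1$ in the constraint indexed by $(i,I,a,J,a')$ and with coefficient $-1$ through the terms $\sum_{K\in\C(J,a')}v[i,I,a,K]$ of its predecessor---the equation $B^\top\boldsymbol{y}=\boldsymbol{0}$ is a flow-conservation identity on the tree of player $i$'s infosets rooted at $I$: for every $J$, the outgoing dual mass $\sum_{a'\in A(J)}y[i,I,a,J,a']$ equals the mass entering $J$ from its (unique, by perfect recall) predecessor. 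First I would use these identities, exactly as in \citet{huang2008computing}, to normalize the dual components $\boldsymbol{y}$ into per-player marginals and assemble a single \emph{product} distribution $\mu=\prod_{i\in N}\mu_i\in\Delta_\Pi$, represented compactly---hence computable in polynomial time---by these marginals.

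The one genuinely new ingredient is that $A_t$ is built from the \emph{trembling} reaching probabilities $\xi^\eta$ rather than the exact ones. The key auxiliary fact I would isolate is that $\xi^\eta$ still \emph{factorizes across players} under a product distribution: since each factor $\eta(a)^{\mymathbb{1}\{a\in\pi\}}\tilde\eta(a)^{1-\mymathbb{1}\{a\in\pi\}}$ in the definition of $\xi^\eta(z,\pi)$ depends only on the strategy of the player owning $a$, for $\mu=\prod_i\mu_i$ one has
\[
\sum_{\pi\in\Pi}\xi^\eta(z,\pi)\,\mu[\pi]=p_c(z)\prod_{j\in N}\Big(\sum_{\pi_j\in\Pi_j}\mu_j(\pi_j)\hspace{-1mm}\prod_{a\in\sigma(z)\cap A_j}\hspace{-2mm}\eta(a)^{\mymathbb{1}\{a\in\pi_j\}}\tilde\eta(a)^{1-\mymathbb{1}\{a\in\pi_j\}}\Big),
\]
and analogously for the restricted quantities $\xi^\eta(z,I,\pi)$ and $\xi^\eta(z,\hat\pi_i)$ that appear in $A_t$. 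This is the perturbed counterpart of the product-of-indicators factorization used in the EFCE proof, and it is precisely what makes the per-player construction of $\mu$ interact correctly with $A_t$.

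With the factorization in hand I would expand $\boldsymbol{\mu}^\top A_t^\top\boldsymbol{y}=\sum_c y_c\big(\sum_\pi (A_t)_{c,\pi}\mu[\pi]\big)$ and evaluate the inner expectation of each trigger constraint $c=(i,I,a,J,a')$ at the product $\mu$. Because each $\mu_i$ is a behavior strategy, the recommendations player $i$ receives after $I$ are independent of the recommendation $a$ at $I$, and the opponents' recommendations are independent of player $i$'s altogether; hence the utility a trigger agent collects by deviating and the one it collects by obeying carry \emph{the same} tremble factors and differ only through the (marginal) continuation strategy. Weighting the constraints by $\boldsymbol{y}$, the $\boldsymbol{v}$-terms cancel by the flow-conservation identity $B^\top\boldsymbol{y}=\boldsymbol{0}$, while the residual $\boldsymbol{\mu}$-utility terms telescope to zero because $\mu_i$ was constructed as the stationary/flow distribution of the dual weights for player $i$. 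This gives $\boldsymbol{\mu}^\top A_t^\top\boldsymbol{y}=0$, and the compact per-player representation yields the poly-time claim.

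The step I expect to be the main obstacle is this final telescoping in the presence of trembles. In the exact EFCE case the cancellation is immediate because obeying and deviating differ only on a single subtree reached with a clean indicator weight, whereas here both branches are reached with positive, tremble-inflated probability, so I must track the $\tilde\eta$ bookkeeping and verify that every action on the path to a terminal $z\in Z(I)$ contributes the same factor on the \emph{obey} and \emph{deviate} sides \emph{except} for player $i$'s own actions strictly between $I$ and $z$---which is exactly the split encoded by $\xi^\eta(z,I,\pi)$ versus $\xi^\eta(z,\hat\pi_i)$. Establishing that this difference collapses to a per-player expression annihilated by the stationarity of $\mu_i$, uniformly in the perturbation index $t$, is where the care is needed.
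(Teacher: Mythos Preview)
Your proposal is essentially correct and follows the same underlying Hart--Schmeidler/Huang--von Stengel route as the paper; the paper's own proof is little more than a pointer saying ``same as Lemma~5 of \citet{huang2008computing}, with $A_t$ in place of $A$, since the dual variables $\boldsymbol{y}$ satisfy the identical constraint $B^\top\boldsymbol{y}=\boldsymbol{0}$ and hence retain the same interpretation as deviator strategies in the auxiliary $2$-player zero-sum game.''

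The one difference in presentation is framing rather than substance. The paper casts the argument game-theoretically: $\boldsymbol{y}$ is a strategy for the \emph{deviator} (player~2) in an auxiliary zero-sum game, the entries of $A_t^\top\boldsymbol{y}$ are payoffs to the \emph{recommender} (player~1) for each $\pi\in\Pi$, and the Hart--Schmeidler fixed-point/stationary construction furnishes a product $\mu$ with value zero \emph{regardless of what the payoff matrix is}, so long as the deviator's strategy space is governed by $B^\top\boldsymbol{y}=\boldsymbol{0}$. You instead unpack the mechanics---flow conservation, per-player marginals, telescoping---which is exactly what that fixed-point argument delivers when carried out. Your factorization observation for $\xi^\eta$ is correct and is indeed what makes the auxiliary-game payoffs separate across players, but note that from the paper's viewpoint this is almost a non-issue: the Hart--Schmeidler machinery is agnostic to the numerical entries of $A_t$, so once you know $B$ is unchanged, the existence and poly-time computability of $\mu$ follow immediately without any $\tilde\eta$ bookkeeping. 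Your anticipated ``main obstacle'' therefore largely dissolves if you adopt the auxiliary-game abstraction rather than redoing the cancellation by hand.
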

\noindent
The proof of Lemma~\ref{lem:lemmasep} follows the same line as that of Lemma~5 by~\citet{huang2008computing} (see~\citep{huang2011equilibrium} for its complete version) and it is based on the CE existence proof by~\citet{hart1989existence}.

\section{Discussion and Future Works}\label{sec:discussion}

We started the study of \emph{trembling-hand perfection} in sequential games with correlation, introducing the EFPCE as a refinement of the EFCE that amends its weaknesses off the equilibrium path.
This paves the way to a new research line, raising novel game-theoretic and computational challenges.

As for EFPCEs, an open question is whether compact correlated strategy representations, like the EFCE-based \emph{correlation plan} by~\citet{von2008extensive}, are possible in some restricted settings, such as $2$-player games without chance.
This would enable the optimization over the set of EFPCEs in polynomial time.
The main challenge raised by EFPCEs with respect to EFCEs is that the former require to reason about general, un-reduced strategy profiles.

Another possible future work is to extend our analysis to other CE-based solution concepts, such as the \emph{normal-form} CE and the \emph{agent-form} CE (see~\citep{von2008extensive} for their definitions).
This raises the interesting question of how different trembling-hand-based CEs are able to amend weaknesses off the equilibrium path.

Finally, an interesting direction is to consider different ways of refining CE-based equilibria in sequential games, such as, \emph{e.g.}, using \emph{quasi-perfection}~\citep{van1984relation}.

\section*{Acknowledgments}	
This work has been partially supported by the Italian MIUR PRIN 2017 Project ALGADIMAR ``Algorithms, Games, and Digital Market''.

\bibliography{refs}
\bibliographystyle{aaai}

\clearpage

\setcounter{secnumdepth}{1} %ADDED
\appendix
\onecolumn

\begin{center}
	\LARGE\bf Appendix
\end{center}

\section{Sequence-Form Representation for Perfect-Recall EFGs}\label{app:sequence}

The number of pure strategies $|\Pi_i|$ of each player $i \in N$ may be exponentially large in the size of an EFG, preventing the development of scalable computational tools using them.
Moreover, the same holds for \emph{reduced} pure strategies, which only specify actions at infosets that are reachable given the player's past moves. 
This problem is circumvented by the \emph{sequence form} introduced by~\citet{von1996efficient}, where each player selects a \emph{sequence} of actions rather than a pure strategy.
For any node $h \in H$, we let $\sigma_i(h)$ be the ordered sequence of actions of player $i \in N$ on the path from the root of the game tree to $h$.
We recall that, given the perfect recall assumption, all nodes in an infoset $I \in \I_i$ of player $i \in N$ define the same sequence $\sigma_i(I)$ of player $i$'s actions, \emph{i.e.}, it holds $\sigma_i(h) = \sigma_i(I) $ for all $h \in I$.
Moreover, $\sigma_i(I)$ can be {extended} by any action $a \in A(I)$, defining a new player $i$'s sequence $\sigma_i(I) a$.
Thus, by introducing the empty sequence to represent the paths in the game tree in which a player does not play, the set of sequences available to player $i \in N$ is $\Sigma_i \coloneqq \{\varnothing\} \cup \{ \sigma_i(I)a \mid I \in \I_i , a \in A(I) \}$.
Within the sequence form, mixed strategies are expressed as \emph{realization plans}.
A realization plan for player $i \in N$ is a function $x_i: \Sigma_i \to [0,1]$, with $x_i(\sigma_i)$ expressing the realization probability of sequence $\sigma_i \in \Sigma_i$.
In order to be well defined, $x_i$ must satisfy the linear constraints $x_i(\varnothing) = 1$ and $x_i(\sigma_i(I)) = \sum_{a \in A(I)} x_i(\sigma_i(I)a)$ for every infoset $I \in \I_i$.
Since the number of sequences $|\Sigma_i|$ of each player $i \in N$ is polynomial in the size of an EFG and realization plans can be easily expressed by linear constraints, the sequence form is an appealing formalism for handling EFGs.
Moreover, as shown by~\citet{von1996efficient}, the crucial property of the sequence form is that realization plans and behavior strategies are equally expressive in EFGs with perfect recall.
In particular, $x_i$ is equivalent to a behavior strategy that selects $a \in A(I)$ with probability $\frac{x_i(\sigma_i(I) a)}{x_i(\sigma_i(I))}$ if $x_p(\sigma_i(I)) > 0$ and arbitrarily if $x_i(\sigma_i(I)) = 0$.
Conversely, a behavior strategy $\beta_i$ is equivalent to a realization plan that selects each sequence $\sigma_i \in \Sigma_i$ with probability $\prod_{a \in \sigma_i} \beta_i(a)$.

\section{Characterization of EFCEs Using Trigger Agents}\label{app:trigger}

We provide a formal statement of the characterization of EFCEs based on trigger agents (see Definition~\ref{def:trigger}), originally introduced by~\citet{DBLP:conf/icml/GordonGM08}~and~\citet{DBLP:conf/nips/FarinaLFS19a} (see also~\citep{DBLP:conf/aaai/FarinaBS20} for a more general treatment).
We recall that such characterization is based on the fact that $\mu \in \Delta_\Pi$ is an EFCE if, for every $i \in N$, player $i$'s expected utility when following recommendations is at least as large as the expected utility that any $(I, a, \hat \mu_i)$-trigger agent for player $i$ can achieve (assuming the opponents' do not deviate from recommendations).

For any $\mu\in\Delta_\Pi$ and $(I, a, \hat \mu_i)$-trigger agent, we define the probability of reaching  a terminal node $z \in Z(I)$ as:
\begin{equation}\label{eq:p}
p_{\mu,\hat\mu_i}^{I,a}(z) \coloneqq \left( \sum_{\substack{\pi_i \in \Pi_i(I, a)\\\pi_{-i} \in \Pi_{-i}(z)}} \mu (\pi_i, \pi_{-i}) \right)  \left( \sum_{\hat \pi_i \in \Pi_i(z)} \hat \mu_i(\hat \pi_i)  \right) p_c(z),
\end{equation}
which accounts for the fact that the agent follows recommendations until she receives the recommendation of playing $a$ at $I$, and, thus, she `gets triggered' and plays according to $\hat \pi_i$ sampled from $\hat \mu_i$ from $I$ onwards.
Moreover, the probability of reaching a terminal node  $z \in Z$ when following the recommendations is defined as follows:
\begin{equation}\label{eq:q}
q_\mu(z) \coloneqq \left( \sum_{\pi \in \Pi(z)} \mu(\pi) \right) p_c(z).
\end{equation}

Then, the following lemma provides the trigger-agent-based characterization of EFCEs:
\begin{lemma}[\citet{DBLP:conf/aaai/FarinaBS20}]\label{lem:efce_trigger}
	Given an EFG $\Gamma$, $\mu \in \Delta_\Pi$ is an EFCE of $\Gamma$ if for every $i \in  N$ and $(I, a, \hat \mu_i)$-trigger agent for player $i$, it holds that:
	\begin{equation*}\label{eq:efce_simp}
	\sum_{z \in Z(I,a)} q_\mu(z) u_i(z) \geq \sum_{z \in Z(I)} p_{\mu,\hat\mu_i}^{I,a}(z) u_i(z).
	\end{equation*}
\end{lemma}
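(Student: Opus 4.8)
The plan is to unfold the definition of EFCE (Definition~\ref{def:efce}): $\mu$ is an EFCE exactly when following recommendations is an NE of the extended game $\Gamma^{\textnormal{ext}}(\mu)$, i.e.\ when, for every player $i \in N$, the behavior strategy that always obeys the signalled recommendation is a best response to the opponents obeying theirs. So the whole task reduces to translating the best-response condition for a single player $i$ into the family of trigger-agent inequalities in the statement. I would first describe the infoset structure of $\Gamma^{\textnormal{ext}}(\mu)$ for player $i$: every original infoset $I \in \I_i$ splits into copies indexed by the recommendations received at the infosets $J \preceq I$, and a deviation of player $i$ is a behavior strategy over these copies, evaluated against the chance node that draws $\pi \sim \mu$ and reveals $\pi_i(J)$ only upon reaching $J$.

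The crux is that, because of the EFCE defection rule encoded in the extended game (once a player disobeys at some infoset she no longer receives recommendations and may play arbitrarily thereafter), any ``minimal'' deviation of player $i$ is completely described by the first infoset $I$ at which she disobeys, the recommendation $a$ she disobeys there, and the continuation strategy she switches to from $I$ onward --- which is exactly an $(I, a, \hat\mu_i)$-trigger agent with $\hat\mu_i \in \Delta_{\Pi_i(I)}$ (Definition~\ref{def:trigger}). I would therefore argue that the set of player $i$'s deviation payoffs in the extended game is spanned by trigger-agent payoffs, so that no deviation is profitable if and only if no trigger-agent deviation is profitable. Matching probabilities term by term, the expected payoff of obeying from the event ``$I$ is reached and $a$ is recommended'' is $\sum_{z \in Z(I,a)} q_\mu(z) u_i(z)$, using the reaching probability from Equation~\eqref{eq:q}, whereas the trigger agent who plays $\hat\pi_i \sim \hat\mu_i$ from $I$ on obtains $\sum_{z \in Z(I)} p_{\mu,\hat\mu_i}^{I,a}(z) u_i(z)$, using $p_{\mu,\hat\mu_i}^{I,a}$ from Equation~\eqref{eq:p}; requiring the former to dominate yields precisely the claimed inequality.

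To make the ``spanned by trigger agents'' step rigorous I would induct on the partial order $\prec$ of player $i$'s infosets, processing them from the leaves of the tree upward. The subtle point --- and the part I expect to be the \emph{main obstacle} --- is that a general deviation may disobey at several infosets, and after an early disobedience the later infosets are reached under the deviated, recommendation-free play rather than under $\mu$. The backward induction must show that the single-trigger inequalities, applied at every $(I,a)$, telescope to forbid such multi-point deviations: at the deepest infoset where a profitable disobedience could occur the corresponding trigger inequality already rules it out, so one may assume obedience there and push the argument one level up. Care is needed to keep consistent, across the induction, the factorization of $p_{\mu,\hat\mu_i}^{I,a}$ into a ``reach $I$ and be recommended $a$'' factor and an independent ``continuation $\hat\mu_i$'' factor, together with the signalling that leaks $\pi_i(J)$ only at $J$. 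Finally I would observe that the argument is reversible --- taking $\hat\mu_i$ to be a point mass recovers each individual single-infoset deviation --- so the trigger-agent inequalities are not merely sufficient but characterize the EFCE condition, matching Lemma~\ref{lem:efce_trigger}.
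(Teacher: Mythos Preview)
The paper does not supply its own proof of Lemma~\ref{lem:efce_trigger}; the result is stated in Appendix~\ref{app:trigger} as a characterization quoted from \citet{DBLP:conf/aaai/FarinaBS20} (building on \citet{DBLP:conf/icml/GordonGM08}), with no argument given. There is therefore nothing in the paper to compare your proposal against. For context, the perturbed analogue that the paper \emph{does} prove, Theorem~\ref{thm:charac_recc_ne}, simply asserts the reduction to trigger-agent deviations as its opening line and then performs the algebraic cancellation of the terms common to obeying and deviating; the structural step you outline --- why deviations in $\Gamma^{\textnormal{ext}}(\mu)$ are exhausted by trigger agents --- is exactly the part the paper outsources to the cited reference rather than proves.

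Your reconstruction is sound under the convention you correctly invoke (and which matches \citet{von2008extensive}) that a player stops receiving recommendations after her first defection, so that her continuation from the defection point is fully described by some $\hat\mu_i \in \Delta_{\Pi_i(I)}$. One simplification you might consider in place of the bottom-up induction: for any pure deviation in the extended game, the set of first-defection infosets is pairwise incomparable under $\prec$ (a defection at one makes any deeper extended infoset with a recommendation label unreachable), and by perfect recall their reachable terminal nodes in the extended tree are disjoint. Hence the gain of the full deviation over obeying decomposes \emph{additively} into the gains of the individual $(I,a,\hat\mu_i)$-triggers, each nonpositive by hypothesis. This replaces your telescoping step in one stroke and avoids the bookkeeping you flag as the main obstacle.
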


\section{LP Formulation for the Set of EFCEs in $n$-Player EFGs}\label{app:lp}

We show how to derive the LP formulation (Problem~\ref{prob:primal_ellip}) for the set of EFCEs in $n$-player EFGs originally introduced by~\citet{huang2008computing}, using the characterization of EFCEs based on trigger agents (see Definition~\ref{def:trigger} and Lemma~\ref{lem:efce_trigger}).

In the following, we assume that a probability distribution $\mu \in \Delta_{\Pi}$ is encoded by means of variables $\mu[\pi]$, defined for $\pi \in \Pi$.
For every player $i \in N$, infoset $I \in \I_i$, and action $a \in A(I)$, we introduce a variable $u[i, I, a]$ representing player $i$'s expected utility when following the recommendation to play $a$ at infoset $I$.
These variables are defined by the following constraints:
\begin{align}\label{eq:aux_u}
& u[i, I, a] = \sum_{z \in Z(I,a)} \left(  \sum_{\pi \in \Pi(z)} \mu[\pi] \right) p_c(z) u_i(z) & \forall  i \in N, \forall I \in \I_i, \forall a \in A(I).
\end{align}
Then, we need to introduce constraints which ensure that following recommendations guarantees a utility at least as large as that achieved by any $(I, a, \hat \mu_i)$-trigger agent.
For every infoset $J \in \I_i$ such that $I \preceq J$, we introduce a variable $v[i,I,a,J]$ that encodes the maximum expected utility obtained at infoset $J$ by trigger agents associated with $I$ and $a$.
We can recursively define variables $v[i,I,a,J]$ as follows:
\begin{align}\label{eq:dev_mu}
& v[i,I,a,J] \geq \sum_{z \in Z^\bot (J, a')} \left(   \sum_{\substack{ \pi_i \in \Pi_i(I, a) \\ \pi_{-i} \in \Pi_{-i}(z)  } } \mu[\pi_i, \pi_{-i}]  \right) p_c(z) u_i(z) + \sum_{K \in \C(J, a^\prime)} v[i, I, a, K] \\
& \hspace{7.5cm}\forall i \in N, \forall I \in \I_i, \forall a \in A(I), \forall J \in \I_i: I \preceq J,  \forall a^\prime \in A(J), \nonumber
\end{align}
where we notice that the first summation is over the set of terminal nodes which are reachable from $J$ by playing $a^\prime$ without traversing any other player $i$'s infoset.
The following incentive constraints complete the formulation:
\begin{align}\label{eq:inc_mu}
& u[i,I,a] = v[i,I,a,I] & \forall i \in N, \forall I \in \I_i, \forall a \in A(I). 
\end{align}

A direct application of Lemma~\ref{lem:efce_trigger} and LP duality is enough to prove that Constraints~\eqref{eq:aux_u},~\eqref{eq:dev_mu},~and~\eqref{eq:inc_mu} correctly characterize the set of EFCEs (formally, it is enough to follow steps similar to those in the proof of Theorem~\ref{thm:trembling_lp_theorem}, with the only difference that Constraints~\eqref{eq:primal_inner_cons3} in the inner maximization problems and the corresponding dual variables $w[i,I,a,J,a']$ are missing).

By substituting the equalities in Constraints~\eqref{eq:aux_u}~and~\eqref{eq:inc_mu} into Constraints~\eqref{eq:dev_mu}, we obtain the following set of linear constraints, which are equivalent to those introduced by~\citet{huang2008computing}:
\begin{equation*}
	A \boldsymbol{\mu} + B \boldsymbol{v }\geq \boldsymbol{0},
\end{equation*}
where $\boldsymbol{\mu}$ is the vector whose components are the variables $\mu[\pi]$ for $\pi \in \Pi$, while $\boldsymbol{v}$ is the vector of variables $v[i,I,a,J]$ indexed by $ i \in N,  I \in \I_i,  a \in A(I)$, and $J \in \I_i: I \preceq J$.
Moreover, the matrices $A$ and $B$ encode the coefficients appearing in Constraints~\eqref{eq:aux_u}~and~\eqref{eq:dev_mu}.
Specifically, non-zero entries of $A$ are products $p_c(z) u_i(z)$, while those of $B$ are either $1$ or $-1$.

\section{Discussion on EFPCEs and Un-Reduced Strategies}\label{app:reduced}

Next, we discuss the reasons why EFPCEs need un-reduced strategy profiles in order to be defined consistently.

First, we remark that, as discussed by~\citet{von2008extensive}, restricting the definition of probability distributions $\mu$ to \emph{reduced} strategy profiles (\emph{i.e.}, those in which each player's pure strategy only specifies actions at infosets reachable given that player's moves, see~\citep{vermeulen1998reduced} for a formal definition) is sufficient for the characterization of the classical notions of correlated equilibria.
Intuitively, the reason is that, at the equilibrium, each player follows recommendations issued by the correlation device, and, thus, the latter does \emph{not} need to specify action recommendations for the player at those infosets that are never reached when following recommendations at the preceding infosets of the same player.

This is no longer the case if we introduce trembles in the game, which make all the infosets reachable with positive probability even when committing to following recommendations.
As a result, the correlation device has to be ready to issue action recommendations everywhere in the game. 
Then, when defining EFPCEs, we cannot restrict the attention to probability distributions over reduced strategy profiles, and un-reduced ones are necessary.
The EFG in Figure~\ref{fig:example_game}(\emph{Left}) provides an example where un-reduced strategies are necessary to express EFPCEs.
As shown in the main text, any EFPCE of the game must recommend to play $a$ at $\textsc{i}$, while, at the same time, it is crucial to define recommendations also at infosets $\textsc{k}$ and $\textsc{l}$, in order to achieve optimality off the equilibrium path.
Clearly, this is incompatible with reduced strategies, as any player $1$'s reduced strategy prescribing $a$ at $\textsc{i}$ does not specify anything at infosets $\textsc{k}$ and $\textsc{l}$, which are unreachable when playing $a$ at $\textsc{i}$.

\section{Detailed Examples of EFPCEs}\label{app:example}

Consider the EFG in Figure~\ref{fig:example_game}(\emph{Left}) and lower bound functions $\eta_t: A \to (0,1)$ for $t \in \mathbb{N}$, with $\eta_t(a)$ converging to zero as $t \rightarrow \infty$ for each $a \in A$.
First, let us notice that, without trembles, player $1$ is always better off playing action $a$ at the root infoset $\textsc{i}$, since she can guarantee herself a utility of $1$ by selecting $c$ at the following infoset $\textsc{j}$, while she can achieve at most a utility of $\frac{1}{2}$ by playing $b$.
Thus, any EFPCE of the game (as well as any EFCE) must recommend $a$ at $\textsc{i}$ with probability $1$, since there is no way player $1$ can be incentivized to play $b$.
Then, in the sub-game reached when playing $a$ at $\textsc{i}$, it is easy to check that recommending the pairs of actions $(c,m)$, $(c,n)$, and $(d,m)$ each with probability $\frac{1}{3}$ is an equilibrium, as each player has no incentive to deviate from each possible recommendation, even in presence of trembles.
As an example, consider the case in which player $1$ is told to play action $c$ at $\textsc{j}$.
Then, by following the recommendations, she gets a utility equal to:
\begin{align*}
	& \Big[  2 \cdot \frac{1}{3}  \left(  1 - \eta_t(n) \right) \left( 1 - \eta_t(d) \right) + 3 \cdot \frac{1}{3}   \left(  1 - \eta_t(n) \right) \eta_t(d) + 1 \cdot \frac{1}{3}  \eta_t(n) \left( 1 - \eta_t(d) \right) + 0 \cdot \frac{1}{3}  \eta_t(n) \eta_t(d)  \Big] + \\
	& \Big[  1 \cdot \frac{1}{3}  \left(  1 - \eta_t(m) \right) \left( 1 - \eta_t(d) \right) + 0 \cdot \frac{1}{3}   \left(  1 - \eta_t(m) \right) \eta_t(d) + 2 \cdot \frac{1}{3}  \eta_t(m) \left( 1 - \eta_t(d) \right) + 3 \cdot \frac{1}{3}  \eta_t(m) \eta_t(d)  \Big],
\end{align*}
where the first sum is for the case in which $(c,m)$ is recommended, while the second one is for $(c,n)$.
Each term appearing in a sum is for one of the four possible outcomes that may result when following recommendations subject to trembles.
Instead, player $1$'s utility if deviating to $d$ at $\textsc{j}$ is:
\begin{align*}
	&  \Big[  3 \cdot \frac{1}{3}  \left(  1 - \eta_t(n) \right) \left( 1 - \eta_t(c) \right) + 2 \cdot \frac{1}{3}   \left(  1 - \eta_t(n) \right) \eta_t(c) + 0 \cdot \frac{1}{3}   \eta_t(n) \left( 1 - \eta_t(c) \right) + 1 \cdot \frac{1}{3}   \eta_t(n) \eta_t(c)  \Big] + \\
	&  \Big[  0 \cdot \frac{1}{3}  \left(  1 - \eta_t(m) \right) \left( 1 - \eta_t(c) \right) + 1 \cdot \frac{1}{3}   \left(  1 - \eta_t(m) \right) \eta_t(c) + 3 \cdot \frac{1}{3}   \eta_t(m) \left( 1 - \eta_t(c) \right) + 2 \cdot \frac{1}{3}   \eta_t(m) \eta_t(c)  \Big].
\end{align*}
A simple calculation shows that the first quantity is greater than or equal to the second one as the lower bounds approach zero.
Analogous conditions hold for other recommendations at infosets $\textsc{x}$ and $\textsc{j}$.
Notice that, when lower bounds are zero, the conditions above collapse to the classical incentive constraints for EFCE.
The correlation device described up to this point is sufficient to define an EFCE, as recommendations at infosets $\textsc{y}$, $\textsc{k}$, and $\textsc{l}$ are \emph{not} relevant given that they do not influence players' utilities at the equilibrium ($b$ is never recommended).
However, in perturbed extended games, these infosets could be reached due to a tremble which happens with probability $\eta_t(b)$, and, thus, recommendations at such infosets become relevant.
Then, it is easy to check that player $2$ must be told to play $p$ at $\textsc{y}$, because her utility is always $1$ if she plays $p$, while it is always $0$ when playing $o$.
Moreover, with an analogous reasoning, player $1$ must be recommended to play $e$ and $h$ at $\textsc{k}$ and $\textsc{l}$, respectively.
As an example, consider the case in which player $1$ is recommended to play $e$ at $\textsc{k}$.
Then, her utility would be $ \frac{1}{2} \cdot \left( 1 - \eta_t(f) \right) + 0 \cdot \eta_t(f)$, while she would get $0 \cdot \left( 1 - \eta_t(e) \right) + \frac{1}{2} \cdot \eta_t(e)$ by deviating to $e$.
Similar conditions hold for infosets $\textsc{y}$ and $\textsc{l}$.
In conclusion, we can state that the following distribution $\mu \in \Delta_{\Pi}$ defines an EFPCE:
\begin{align*}
\mu(aceh,mp) = \mu(aceh,np) = \mu(adeh,mp) = \frac{1}{3}.
\end{align*}
Let us remark that this is not the only EFPCE of the game, as there are other ways of correlating players' behavior at infosets $\textsc{x}$ and $\textsc{j}$ while satisfying the required incentive constraints.
For example, setting
\begin{align*}
	\mu(aceh,mp) = \mu(aceh,np) = \mu(adeh,mp) = \mu(adeh,np) = \frac{1}{4}
\end{align*}
defines a valid EFPCE that results from a PE of the game (where players play uniform strategies at infosets $\textsc{x}$ and $\textsc{j}$).

\section{Proofs of Theorems and Lemmas}\label{app:proofs}

In this section, we provide the complete proofs of Theorems~\ref{thm:relations},~\ref{thm:relations_ne},~\ref{thm:charac_recc_ne},~\ref{thm:trembling_lp_theorem},~and~Lemma~\ref{lem:lemmasep}.

\thmrelations*

\begin{proof}
	Clearly, $\textnormal{\textsf{EFPCE}} \subseteq \textnormal{\textsf{EFCE}}$ holds since any PE of $\Gamma^{\textnormal{ext}}(\mu)$ is also an NE.
	As for the other relation, let $\{ \beta_i \}_{i \in N}$ be a PE of $\Gamma$ obtained for a sequence of perturbed games $\{ (\Gamma, \eta_t) \}_{t \in \mathbb{N}}$ and a corresponding sequence of NEs in these games, namely $\{ \beta_{i,t} \}_{i \in N}$ for $t \in \mathbb{N}$, where each $\beta_{i,t}$ is a well-defined player $i$'s behavior strategy in $(\Gamma, \eta_t) $, \emph{i.e.}, it holds $\beta_{i,t}(a) \geq \eta_t(a)$ for all $t \in \mathbb{N}$, $i \in N$, and $a \in A_i$.
	Let $\mu \in \Delta_{\Pi}$ be such that, for every $\pi \in \Pi$, it holds $\mu(\pi) = \prod_{i \in N} \prod_{I \in \I_i} \beta_i(\pi_i(I))$.
	Consider the extended game $\Gamma^{\textnormal{ext}}(\mu)$, where we denote with $\I^{\textnormal{ext}}_i$ the set of all infosets of player $i \in N$, one for each infoset $I \in \I_i$ of $\Gamma$ and possible combination of recommendations received by $i$ at the infosets $J \in \I_i: J \preceq I$.
	Overloading the notation, for each infoset $I \in \I^{\textnormal{ext}}_i$ of the extended game, we use $I$ as well to denote the corresponding infoset in the original game.
	We also use $A(I)$ as the set of actions available at $I \in \I^{\textnormal{ext}}_i$.
	Let $\{ (\Gamma^{\textnormal{ext}}(\mu), \eta_t) \}_{t \in \mathbb{N}}$ be the sequence of perturbed extended games resulting from $\{ (\Gamma, \eta_t) \}_{t \in \mathbb{N}}$.
	Furthermore, for each $t \in \mathbb{N}$ and player $i \in N$, we define a player $i$'s behavior strategy for $(\Gamma^{\textnormal{ext}}(\mu), \eta_t) $ such that, at each infoset $I \in \I^{\textnormal{ext}}_i$:
	\begin{itemize}
		\item all the residual probability given the lower bounds $1 - \sum_{a \in A(I): a \neq \pi_i(I)} \eta_t(a)$ is placed on the action $\pi_i(I)$ which is recommended at $I$; and
		\item all the other, non-recommended actions $a \in A(I): a \neq \pi_i(I)$ are played with probabilities equal to their corresponding lower bounds $\eta_t(a)$.
	\end{itemize}
	Intuitively,  these strategies encode the fact that players follow recommendations in the perturbed extended games $(\Gamma^{\textnormal{ext}}(\mu), \eta_t) $, where trembles prevent them to perfectly obey to recommendations.
	Given the definition of $\mu$ and the fact that each $\{ \beta_{i,t} \}_{i \in N} $ constitutes an NE for the perturbed game $(\Gamma, \eta_t) $, we can conclude that the behavior strategies defined above constitute NEs for the perturbed extended games $ (\Gamma^{\textnormal{ext}}(\mu), \eta_t) $.
	Thus, any limit point of the sequence defined by such behavior strategies for $t \in \mathbb{N}$ is a PE of $\Gamma^{\textnormal{ext}}(\mu)$.
	Moreover, by definition, any limit point prescribes to play recommended actions, which shows that $\mu$ defines an EFPCE of $\Gamma$, proving that $\textnormal{\textsf{PE}} \subseteq \textnormal{\textsf{EFPCE}}$.
\end{proof}

\thmrelationsne*

\begin{proof}
	Let us start with the first bullet point.
	We consider the EFG in Figure~\ref{fig:example_game}(\emph{Left}) in order to provide examples that prove the two relations.
	Notice that, in such game, player~$1$ is always better off playing action $a$ at the first infoset $\textsc{i}$, since she can guarantee herself to get at least $1$ by playing $c$ at $\textsc{j}$, while she can achieve at most $1$ by playing action $b$.
	Then, it is easy to check that, in any NE of the game, the players play behavior strategies $\beta_1$ and $\beta_2$ such that:
	\begin{itemize}
		\item $\beta_1(a) = 1$ and $\beta_1(b) = 0$, while $\beta_1(c) = \beta_1(d) = \frac{1}{2}$; and
		\item $\beta_2(m) = \beta_2(n) = \frac{1}{2}$.
	\end{itemize}
	The players' behavior at other infosets can be any, as it does not affect players' utilities at the equilibrium (given that infosets $\textsc{y}$, $\textsc{k}$, and $\textsc{l}$ are never reached due to $\beta_1(b)=0$).
	As we have shown in the main text, one EFPCE of the game is the distribution $\mu \in \Delta_\Pi$ such that
	\begin{align*}
	\mu(aceh, mp) = \mu(adeh,mp) = \mu(aceh, np) = \frac{1}{3},
	\end{align*}
	which enforces each player to follow recommendations, even in presence of trembles.
	Clearly, this distribution $\mu$ cannot come up from players' behavior strategies, and, thus, it cannot result from an NE.
	This shows that $\textnormal{\textsf{EFPCE}} \not\subset \textnormal{\textsf{NE}}$.
	Moreover, notice that any NE such that $\beta_1(f) > 0$ cannot determine a distribution $\mu \in \Delta_{\Pi}$ which is an EFPCE, since it would be the case that action $f$ is recommended with positive probability when reaching infoset $\textsc{k}$ (due to trembles).
	However, player $1$ cannot have any incentive to follow such recommendation, as she can gain a utility of $1$ instead of $0$ by deviating to $e$.
	This proves that $\textnormal{\textsf{NE}} \not\subset \textnormal{\textsf{EFPCE}}$.
	
	As for the second bullet point, notice that all the EFPCEs $\mu \in \Delta_{\Pi}$ which are also NEs must be such that $\mu$ is obtained from some players' behavior strategies defining an NE.
	As a result, by definition of EFPCE, we can conclude that such behavior strategies are indeed PEs.
\end{proof}

\thmcharacreccne*

\begin{proof}
	Given the definitions of $q^\eta_\mu(z)$, $p_{\mu,\hat\mu_i}^{\eta, I,a}(z)$, and $y_{\mu,\hat\mu_i}^{\eta, I,a}(z)$, following recommendations is an NE of $(\Gamma^{\textnormal{ext}}(\mu), \eta) $ if for every $i \in N$ and $(I,a,\hat\mu_i)$-trigger agent for player $i$, it holds that:
	\begin{align*}
	\sum_{z \in Z} q^\eta_\mu(z) u_i(z) \geq \sum_{z \in Z \setminus Z(I)} q^\eta_\mu(z) u_i(z) +  \sum_{z \in Z(I)} y_{\mu,\hat\mu_i}^{\eta, I,a}(z) u_i(z).
	\end{align*}
	Equivalently, we can write:
	\begin{align*}
	& \sum_{z \in Z(I)} q^\eta_\mu(z) u_i(z) \geq \sum_{z \in Z(I)} y_{\mu,\hat\mu_i}^{\eta, I,a}(z) u_i(z) \\
	& \sum_{z \in Z(I)} \left[   \sum_{\substack{\pi_i \in \Pi_i(a) \\ \pi_{-i} \in \Pi_{-i} }} \xi^\eta(z,\pi)  \mu(\pi) +  \sum_{\substack{\pi_i \in \Pi_i \setminus \Pi_i(a) \\ \pi_{-i} \in \Pi_{-i} }} \xi^\eta(z,\pi)  \mu(\pi)  \right]  u_i(z) \geq \\
	& \hspace{3cm} \geq  \sum_{z \in Z(I)} p_{\mu,\hat\mu_i}^{\eta, I,a}(z)  u_i(z) + \sum_{z \in Z(I)} \left[  \sum_{\substack{\pi_i \in \Pi_i \setminus \Pi_i(a) \\ \pi_{-i} \in \Pi_{-i} }} \xi^\eta(z,\pi)  \mu(\pi)  \right] u_i(z) \\
	& \sum_{z \in Z(I)} \left[   \left( \sum_{\substack{\pi_i \in \Pi_i(a) \\ \pi_{-i} \in \Pi_{-i} }} \xi^\eta(z,\pi)  \mu(\pi) \right) u_i(z)   \right] \geq \sum_{z \in Z(I)} p_{\mu,\hat\mu_i}^{\eta, I,a}(z) u_i(z),
	\end{align*}
	which proves the result.
\end{proof}

\thmtremblinglptheorem*

\begin{proof}
	By Theorem~\ref{thm:charac_recc_ne}, following recommendations is an NE of $(\Gamma^{\textnormal{ext}}(\mu), \eta_t)$ if the vector $\boldsymbol{\mu}$ of variables $\mu[\pi]$ encoding the distribution $\mu$ satisfies the following constraints (form here on, we omit the subscript $t$ for the ease of notation):
	\begin{align*}
	& \sum_{z \in Z(I)} \left[   \left( \sum_{\substack{\pi_i \in \Pi_i(a) \\ \pi_{-i} \in \Pi_{-i} }} \xi^\eta(z,\pi)  \mu(\pi) \right) u_i(z)   \right] = \sum_{z \in Z(I)} p_{\mu,\hat\mu_i^{I,a}}^{\eta, I,a}(z) u_i(z) &  \forall i \in N, \forall I \in \I_i, \forall a \in A(I) \\
	& \hat\mu_i^{I,a} \in \argmax_{\hat \mu_i \in \Delta_{\Pi_i(I)}} \left\{  \sum_{z \in Z(I)} p_{\mu,\hat\mu_i}^{\eta, I,a}(z) u_i(z)    \right\} &  \forall i \in N, \forall I \in \I_i, \forall a \in A(I),
	\end{align*}
	where we replaced quantifications over all player $i$'s pure strategies $\hat \mu_i \in \Delta_{\Pi_i(I)}$ with inner maximizations, by introducing auxiliary variables $\hat \mu_i^{I,a}$ for each player $i \in N$, infoset $I \in \I_i$, and action $a \in A(I)$.
	Next, let us notice that, as long as the objective to be maximized  in each inner problem is the sum $\sum_{z \in Z(I)} p_{\mu,\hat\mu_i}^{\eta, I,a}(z) u_i(z) $ (which only contains terms referred to terminal nodes reachable from $I$), strategies $\hat \mu_i \in \Delta_{\Pi_i(I)}$ can be replaced with realization plans $x_i: \Sigma_i \to [0,1]$ such that $x_i(\sigma_i(I))=1$ (\emph{i.e.}, where the probability of reaching infoset $I$ given player $i$'s moves is $1$).
	This holds thanks to the equivalence between mixed strategies and realization plans~\citep{von1996efficient}.
	As a result, for every player $i \in N$, infoset $I \in \I_i$, and action $a \in A(I)$, we can write each inner maximization problem as follows:
	\begin{subequations}\label{prob:primal_inner}
		\begin{align}
		\max & \quad \sum_{z \in Z(I)} \left( \sum_{\substack{\pi_i \in \Pi_i(a)\\\pi_{-i} \in \Pi_{-i}}} \xi^\eta(z,I,\pi) \mu (\pi) \right)  u_i(z) x_i[\sigma_i(z)] \quad \textnormal{s.t.} \label{eq:primal_inner_obj}\\
		& x_i[\sigma_i(I)] = 1 \label{eq:primal_inner_cons1}\\
		& x_i[\sigma_i(J)] = \sum_{a' \in A(J)} x_i[\sigma_i(J)a'] & \forall J \in \I_i: I \preceq J, \forall a' \in A(J)\label{eq:primal_inner_cons2} \\
		& x_i[\sigma_i(J) a'] \geq \eta(a') x_i[\sigma_i(J)] & \forall J \in \I_i: I \preceq J, \forall a' \in A(J) \label{eq:primal_inner_cons3} \\
		& x_i[\sigma_i(I)] \geq 0 \nonumber\\
		& x_i[\sigma_i(J) a'] \geq 0& \forall J \in \I_i: I \preceq J, \forall a' \in A(J) , \nonumber
		\end{align}
	\end{subequations}
	where $x_i[\sigma_i(I)]$ and $x_i[\sigma_i(J) a']$ are variables encoding a player $i$'s realization plan restricted to sequences extending $\sigma_i(I)$ (these are the only variables needed, since Objective~\eqref{eq:primal_inner_obj} does not depend on the realization plan probabilities of other sequences).
	We also notice that the trembles associated with player $i$'s actions at infosets $J \in \I_i : I \preceq J$ (managed by the terms $\xi^{\eta}(z,I,\hat \pi_i)$ in the definition of $p_{\mu,\hat\mu_i}^{\eta, I,a}(z)$) are encoded by Constraints~\eqref{eq:primal_inner_cons3}, which ensure that each action $a' \in A(J)$ is played with probability $\frac{x_i[\sigma_i(J)a']}{x_i[\sigma_i(J)]} \geq \eta(a')$ (given that the denominator is non-null).
	The dual of Problem~\ref{prob:primal_inner} reads as follows:
	\begin{subequations}\label{prob:dual_inner}
		\begin{align}
		\min & \quad v[i,I,a,\varnothing] \quad \textnormal{s.t.} \label{eq:dual_inner_obj}\\
		& v[i,I,a,\varnothing] \geq v[i,I,a,I] + \sum_{a' \in A(I)} \eta(a') w[i,I,a,I,a'] \label{eq:dual_inner_cons1}\\
		& v[i,I,a,J] - w[i,I,a,J,a'] \geq  \sum_{z \in Z^\bot(J,a')} \left( \sum_{\substack{\pi_i \in \Pi_i(a)\\\pi_{-i} \in \Pi_{-i}}} \xi^\eta(z,I,\pi) \mu (\pi) \right)  u_i(z) + \nonumber \\
		& \hspace{.5cm}+ \sum_{K \in \C(J,a')} \left(  v[i,I,a,K] + \sum_{a'' \in A(K)} \eta(a'') w[i,I,a,K,a'']  \right) \hspace{1cm} \forall J \in \I_i: I \preceq J, \forall a' \in A(J)\label{eq:dual_inner_cons2} \\
		& w[i,I,a,J,a'] \geq 0 \hspace{7.6cm} \forall J \in \I_i: I \preceq J, \forall a' \in A(J), \nonumber
		\end{align}
	\end{subequations}
	where $v[i,I,a,\varnothing]$ is the dual variable associated to Constraint~\eqref{eq:primal_inner_cons1}, $v[i,I,a,J]$ for $J \in \I_i : I \preceq J$ are the dual variables associated to Constraints~\eqref{eq:primal_inner_cons2}, and $w[i,I,a,J,a'] $ for $J \in \I_i : I \preceq J$ and $a' \in A(J)$ are the dual variables associated to Constraints~\eqref{eq:primal_inner_cons3}.
	By using the fact that variable $v[i,I,a,\varnothing]$ appears only in Constraint~\eqref{eq:dual_inner_cons1} and by changing sign to variables $w[i,I,a,J,a'] $, we can re-write Problem~\eqref{prob:dual_inner} as follows:
	\begin{subequations}\label{prob:dual_inner_re}
		\begin{align}
		\min & \quad v[i,I,a,I] - \sum_{a' \in A(I)} \eta(a') w[i,I,a,I,a'] \quad \textnormal{s.t.} \label{eq:dual_inner_re_obj}\\
		& v[i,I,a,J] - w[i,I,a,J,a'] \geq \sum_{z \in Z^\bot(J,a')} \left( \sum_{\substack{\pi_i \in \Pi_i(a)\\\pi_{-i} \in \Pi_{-i}}} \xi^\eta(z,I,\pi) \mu (\pi) \right)  u_i(z) + \nonumber \\
		& \hspace{.5cm}+  \sum_{K \in \C(J,a')} \left(  v[i,I,a,K] - \sum_{a'' \in A(K)} \eta(a'') w[i,I,a,K,a'']  \right) \hspace{1cm}   \forall J \in \I_i: I \preceq J, \forall a' \in A(J)\label{eq:dual_inner_ne_cons} \\
		& w[i,I,a,J,a'] \geq 0 \hspace{7.6cm} \forall J \in \I_i: I \preceq J, \forall a' \in A(J). \nonumber
		\end{align}
	\end{subequations}
	Then, we can remove the inner maximization problems by enforcing strong duality, \emph{i.e.}, we add constraints equating Objective~\eqref{eq:primal_inner_obj} and Objective~\eqref{eq:dual_inner_re_obj}.
	Noticing that Objective~\eqref{eq:primal_inner_obj} is equal to $\sum_{z \in Z(I)} p_{\mu,\hat\mu_i^{I,a}}^{\eta, I,a}(z) u_i(z) $, we obtain the following set of linear constraints:
	\begin{align*}
	& \sum_{z \in Z(I)} \left[   \left( \sum_{\substack{\pi_i \in \Pi_i(a) \\ \pi_{-i} \in \Pi_{-i} }} \xi^\eta(z,\pi)  \mu(\pi) \right) u_i(z)   \right] = v[i,I,a,I] - \sum_{a' \in A(I)} \eta(a') w[i,I,a,I,a'] \\
	&  \hspace{12cm}\forall i \in N, \forall I \in \I_i, \forall a \in A(I) \\
	& v[i,I,a,J] - w[i,I,a,J,a'] \geq \sum_{z \in Z^\bot(J,a')} \left( \sum_{\substack{\pi_i \in \Pi_i(a)\\\pi_{-i} \in \Pi_{-i}}} \xi^\eta(z,I,\pi) \mu (\pi) \right)  u_i(z) +  \\
	& \hspace{.5cm}+  \sum_{K \in \C(J,a')} \left(  v[i,I,a,K] - \sum_{a'' \in A(K)} \eta(a'') w[i,I,a,K,a'']  \right) \\
	& \hspace{7.7cm} \forall i \in N, \forall I \in \I_i, \forall a \in A(I), \forall J \in \I_i: I \preceq J, \forall a' \in A(J)\\
	& w[i,I,a,J,a'] \geq 0 \hspace{4.9cm} \forall i \in N, \forall I \in \I_i, \forall a \in A(I),\forall J \in \I_i: I \preceq J, \forall a' \in A(J). 
	\end{align*}
	By introducing variables $u[i,I,a]$ we get to the result.
\end{proof}

\lemmasep*

\begin{proof}
	The proof follows the same line as the proof of Lemma~5~of~\citet{huang2008computing} (its complete version can be found in~\citep{huang2011equilibrium}).
	This is an extension of the CE existence proof by~\citet{hart1989existence} to the case of EFCE.
	It is based on the construction of an auxiliary $2$-player zero-sum EFG, where player $1$ plays first by selecting a strategy profile $\pi \in \Pi$, and player $2$ plays second by choosing an infoset $I \in \I_i$ of some player $i \in N$, an action $a \in A(I)$, and a combinations of actions at following infosets $J \in \I_i : I \preceq J$ (intuitively, player $2$ chooses a trigger agent corresponding to $I$ and $a$, together with a possible trigger agent's behavior).
	It is easy to see that, for our Problem~\ref{prob:dual_ellip_pert}, variables in $\boldsymbol{y}$ have the same meaning as in Lemma~5~of~\citet{huang2008computing}, \emph{i.e.}, they represent valid player $2$'s strategies in the auxiliary game.
	This is because they satisfy the same linear restrictions $B^\top \boldsymbol{y} = \boldsymbol{0}$.
	As a result, the only difference is in the coefficients of the exponentially-many constraints, which, in our case, are defined by the (perturbed) matrix $A_t$, rather than $A$.
	These define the payoffs in the auxiliary game.
	In particular, following steps analogous to those by~\citet{huang2011equilibrium} we can conclude that, in the auxiliary game, player $2$'s expected payment to player $1$ when the latter plays $\pi \in \Pi$ is given by the entry of $A_t^\top \boldsymbol{y}$ corresponding to $\pi$.
	Then, the proof follows the same reasoning as that of~\citet{huang2011equilibrium} to prove the result.
\end{proof}

\end{document}